\newcommand{\defeq}{\vcentcolon=}
\newcommand{\eqdef}{=\vcentcolon}
\newcommand{\R}{\mathbb R}
\theoremstyle{plain}
\newtheorem{thm}{Theorem}
\newtheorem{prop}[thm]{Proposition}
\newtheorem{cor}[thm]{Corollary}
\newtheorem{lem}[thm]{Lemma}
\theoremstyle{definition}
\newtheorem{defi}[thm]{Definition}
\newtheorem{nota}[thm]{Notation}
\newtheorem{rem}[thm]{Remark}
\theoremstyle{definition}
\title{Photon sphere uniqueness in higher-dimensional electrovacuum spacetimes}
\author[1]{Sophia Jahns}
\affil[1]{Department of Mathematics, University of T\"ubingen\\ 

Auf der Morgenstelle 10, 72076 T\"ubingen, Germany}
\date{}     
\begin{document}
\maketitle

\section{Abstract}
We show a uniqueness result for the $n$-dimensional spatial Reissner--Nordstr\"om ma\-ni\-fold: a static, electrovacuum, asymptotically flat system which is asymptotically Reissner--Nordstr\"om 
is a subextremal Reissner--Nordstr\"om manifold with positive mass, provided that its inner boundary is a (possibly disconnected) photon sphere that fulfils a suitably defined quasilocal subextremality condition.

Our result implies a number of earlier uniqueness results for the Schwarzschild and the Reissner--Nordstr\"om manifolds in the static, (electro-)vacuum, asymptotically flat context, 
both for photon sphere and black hole inner boundaries, in the tradition of Bunting--Masood-ul Alaam~\cite{bunting} and Ruback~\cite{ruback}. 
The proof relies on the ideas from those works, combined with newer techniques developed by Cederbaum--Galloway~\cite{cedergal} and Cederbaum~\cite{carlahigh}. 

\section{Introduction}

The $n+1$-dimensional Reissner--Nordstr\"om spacetimes are a 2-parameter family (labelled with a \emph{mass} $m$ and a \emph{charge} $q$) 
of static, spherically symmetric, electrically charged, asymptotically flat solutions to the Einstein equations. A Reissner--Nordstr\"om spacetime is called \emph{subextremal} (\emph{extremal, superextremal}) if $m^2>q^2$ (if $m^2=q^2$, $m^2<q^2$). 

A subextremal or extremal Reissner--Nordstr\"om spacetime contains a unique photon sphere, on which light can get trapped (for a precise definition of photon spheres, see Definition~\ref{photonspheredefi}), 
while a superextremal Reissner--Nordstr\"om spacetime may contain two photon spheres or none, depending on the mass-charge ratio. 

Subextremality in the Reissner--Nordstr\"om family is equivalent to a quasilocal subextremality condition on the photon sphere (see Definition~\ref{subextremal}).

We establish the following uniqueness result for subextremal Reissner--Nordstr\"om spacetimes:

\begin{thm}\label{main} 
 Let $\left(M^n, g, N, \Psi\right)$ be an asymptotically Reissner--Nordstr\"om electrostatic system of mass $m$ and charge $q$, with $n\geq 3$,
 such that $M^n$ has a (possibly disconnected) compact photon sphere as an inner boundary with only subextremal connected components $\Sigma_i^{n-1}$, $1\leq i\leq l$, $l\in\mathbb N$ . 
 Assume moreover that
 \begin{enumerate}
  \item  $N\restriction_{\partial M^n}>0$, and
  \item  $R_{\sigma_i}$ (the scalar curvature of $\Sigma_i^{n-1}$ with respect to the induced metric) is constant.
 \end{enumerate}
 Then $\left(M^n, g\right)$ is isometric to the Reissner--Nordstr\"om manifold of mass $m$ and charge $q$, and $m>|q|$. In particular, $\partial M^n$ is the photon sphere in the Reissner--Nordstr\"om ma\-ni\-fold of mass $m$ and charge $q$, 
 it has only one connected component and is a topological sphere. 
\end{thm}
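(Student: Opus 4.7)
The plan is to follow the Bunting--Masood-ul Alam conformal doubling strategy, as adapted to electrovacuum by Ruback and to photon-sphere boundaries by Cederbaum--Galloway, and upgraded to arbitrary dimension $n\geq 3$ by Cederbaum. The idea is to build two conformal rescalings of $(M^n,g)$, glue them along the photon sphere, and identify the result with Euclidean $\mathbb{R}^n$ via the rigidity case of the positive mass theorem.

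I would begin by selecting two conformal factors $\varphi_\pm = \varphi_\pm(N,\Psi)$, given by explicit rational expressions in $N$ and $\Psi$ modelled on the background Reissner--Nordstr\"om solution---dimensional analogues of Ruback's combinations of $(1\pm N)^2$ and $\Psi^2$---normalized so that $\varphi_+ \to 1$ and $\varphi_- \to 0$ as one approaches the asymptotic end. A direct computation using the static Einstein--Maxwell equations satisfied by $(g,N,\Psi)$ should show that the rescaled metrics $g_\pm \defeq \varphi_\pm^{4/(n-2)} g$ have vanishing scalar curvature, and that $g_-$ extends smoothly across infinity to a one-point compactification.

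Next I would glue one copy of $(M^n,g_+)$ to another copy of $(M^n,g_-)$ along $\partial M^n$. This is where the two hypotheses of the theorem enter: the positivity of $N$ on $\partial M^n$ keeps $\varphi_\pm$ strictly positive there, while the quasilocal subextremality of each component $\Sigma_i$ together with the constancy of $R_{\sigma_i}$ force the induced metrics on $\Sigma_i$ under $g_+$ and $g_-$ to coincide, and the $g_\pm$-mean curvatures of $\Sigma_i$ to be equal in magnitude but opposite in sign. The resulting manifold is asymptotically flat with one end, is $C^{1,\alpha}$ across the seam, has non-negative distributional scalar curvature, and---by the normalization of $\varphi_+$ together with the asymptotic Reissner--Nordstr\"om expansion of $(g,N,\Psi)$---ADM mass zero. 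The rigidity case of the higher-dimensional positive mass theorem then forces the glued manifold to be isometric to flat $\mathbb{R}^n$, and unwinding the conformal rescalings identifies $(M^n,g)$ with a spatial Reissner--Nordstr\"om manifold of mass $m$ and charge $q$. Since flat $\mathbb{R}^n$ has only one end and a single compactification point, $\partial M^n$ consists of a single round topological sphere, and the existence of a subextremal photon sphere forces $m>|q|$.

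The main obstacle is the photon-sphere gluing. In the black-hole setting, $N$ vanishes on the horizon, which renders the matching of $\varphi_\pm$ and their normal derivatives essentially automatic. At a photon sphere, $N>0$ on $\partial M^n$, and the matching must be arranged by hand: the precise quasilocal subextremality condition of Definition~\ref{subextremal}, combined with the constancy of $R_{\sigma_i}$, has to be exactly what is needed to calibrate the boundary values of $\varphi_+$, $\varphi_-$ and the corresponding mean curvatures. Carrying this out in dimension $n>3$, where Gauss--Bonnet is unavailable to pin down the geometry and topology of $\Sigma_i$, is the content of the higher-dimensional techniques of~\cite{carlahigh} and~\cite{cedergal} and is where the bulk of the technical work lies.
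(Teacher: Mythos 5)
Your overall architecture (conformal factors built from $N$ and $\Psi$, gluing, positive mass rigidity, then recovering the conformal factor) is the right family of ideas, but the central step as you describe it fails. You propose to glue a copy of $(M^n,\varphi_+^{4/(n-2)}g)$ directly to a copy of $(M^n,\varphi_-^{4/(n-2)}g)$ along the photon sphere $\partial M^n$, and you claim that subextremality plus constancy of $R_{\sigma_i}$ ``force the induced metrics on $\Sigma_i$ under $g_+$ and $g_-$ to coincide.'' They do not, and no choice of Ruback-type factors can make them: on a photon sphere $N_i>0$, so the two factors $\bigl(\tfrac{(1\pm N_i)^2-\widehat C^2\Psi_i^2}{4}\bigr)^{1/(n-2)}$ take \emph{different} constant values on $\Sigma_i$, and the two induced metrics differ by a nontrivial constant conformal factor. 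This already happens in the model Reissner--Nordstr\"om manifold itself (e.g.\ in Schwarzschild, $N=1/\sqrt{3}$ at the photon sphere), so the seam you want to build is not even consistent in the case the theorem is supposed to characterize. The same problem afflicts the mean curvatures: $H_i>0$ on a photon sphere, and the conformal transformation law does not produce equal-and-opposite mean curvatures for the two factors. Your own observation that in the black hole case ``$N$ vanishes on the horizon, which renders the matching essentially automatic'' is exactly the point — and it is why one cannot double at the photon sphere.

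The paper resolves this with an extra step you are missing: before any doubling or conformal change, it glues to each $\Sigma_i^{n-1}$ an explicit annular \emph{neck} $(I_i\times\Sigma_i^{n-1},\gamma_i,\alpha_iN_{m_i,q_i},\alpha_i\Psi_{q_i}+\beta_i)$, a rescaled piece of a Reissner--Nordstr\"om manifold running from its photon sphere down to its horizon (Section~\ref{glue}). The neck parameters $q_i$ and $m_i$ are calibrated from the quasilocal data $R_{\sigma_i}$, $H_i$, $\nu(\Psi)_i/N_i$ via \eqref{qidef}--\eqref{defmi}, the identities \eqref{scalar2} and \eqref{NandH} are what make the induced metric, mean curvature, $\nu(\widetilde N)$ and $\nu(\widetilde\Psi)$ match to $C^{1,1}$ across $\Sigma_i^{n-1}$, and the subextremality condition of Definition~\ref{subextremal} is used precisely to guarantee $m_i^2>q_i^2$, i.e.\ that the neck actually terminates in a nondegenerate horizon. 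Only then is the manifold doubled — across that horizon, where $N=0$, $H=0$ and $d\Psi=0$ make the reflection $C^{1,1}$ for free — and only then is a \emph{single} conformal factor $\Omega$ applied to the doubled manifold, with Propositions~\ref{zero} and~\ref{onepoint} handling the two ends. Separately, your closing assertion that the existence of a subextremal photon sphere ``forces $m>|q|$'' is not automatic either; in the paper this is a maximum-principle argument for $F_\pm=N-1\pm\widehat C\Psi$ (Lemma~\ref{mandq}), where subextremality enters again through the sign of $\nu(F_\pm)$ on the boundary.
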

The above theorem is implied by the more general Theorem~\ref{meta} below. First, we will comment on some of the assumptions of Theorem~\ref{main}:
\begin{rem}

It is natural to require that $N\restriction_{\partial M^n}\geq 0$, meaning that none of the photon sphere component are inside a black hole. 
Moreover, we will see in Proposition~\ref{photonspheregeometry} how the scalar curvature of $\Sigma_i^{n-1}$ relates to the normal derivative of $\Psi$ on $\Sigma_i^{n-1}$, and in view of this relation, 
it is possible to replace the constant scalar curvature condition by the requirement that $|d\Psi|$ be constant on each $\Sigma_i^{n-1}$. 
If $\partial M^n$ has only one connected component, the condition $R_{\sigma_i}=const.$ does not need to be assumed but is fulfilled automatically, as has been argued in~\cite{Israel}, see also the exposition in~\cite{Heusler}: 
in this case, one can show how $\Psi$ can be written as a function of $N$. Since we show in Proposition~\ref{photonspheregeometry} that $|dN|$ is constant on every connected component of the photon sphere, this would imply that $|d\Psi|$ 
and hence $R_{\sigma_i}$ are also constant. However, this is only possible in case the photon sphere has only one connected component. 
Lastly, it is not possible to do away with the quasilocal subextremality condition on the photon sphere; since a superextremal Reissner--Nordstr\"om manifold may contain a photon sphere, 
it may fulfill the assumptions of Theorem~\ref{main} except for the subextremality condition, but the proof of Theorem~\ref{main} must break down in this case due to the absence of a horizon. 

\end{rem}

 The assumptions of Theorem~\ref{main} may be considerably weakened: 
 First, is possible to allow for static horizon components as inner boundary components. Second, one may wish to replace the electrostatic equation for the Ricci tensor (Equation~\eqref{EEVE3}) 
 with the much weaker inequality $N^2R\geq2|d\Psi|^2$. We will refer to objects fulfilling this inequality and the other two electrostatic equations~\eqref{EEVE1} and~\eqref{EEVE2} as \emph{pseudo-electrostatic}, see Section~\ref{electrosection}. 
 
 Since photon spheres in electrostatic settings are characterized by quasilocal geometry, it is useful to define in the pseudo-electrostatic setting the notion of a \emph{quasilocal photon sphere} 
 as a hypersurface characterized by certain quasilocal properties.
 We will see in Propositions~\ref{cmc},~\ref{meanconvex}, and~\ref{photonspheregeometry} that photon spheres in electrostatic systems are always quasilocal photon spheres; hence, the following is a generalization of Theorem~\ref{main}:

\begin{thm}\label{meta}
 Let $\left(M^n, g, N, \Psi\right)$ be a pseu\-do-\-electro\-static system which is asymptotically Reissner--Nordstr\"om of mass $m$ and charge $q$, and $n\geq 3$.
 
Assume $M^n$ has an orientable, compact inner boundary whose connected components are either nondegenerate static horizons or subextremal quasilocal photon spheres. 

Then $\left(M^n, g\right)$ is isometric to a piece of the Reissner--Nordstr\"om manifold of mass $m$ and charge $q$, and $m>|q|$.

 \end{thm}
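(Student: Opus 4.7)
The plan is to extend the conformal method of Bunting--Masood-ul Alaam, in the form adapted to higher dimensions by Cederbaum~\cite{carlahigh} and to photon-sphere boundaries by Cederbaum--Galloway~\cite{cedergal}, to the present pseudo-electrostatic situation with a mixture of horizon and photon-sphere components.

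First I would introduce two conformal factors $\Omega_\pm$ on $M^n$ built algebraically from $N$ and $\Psi$ (of the form $\Omega_\pm \propto (1\pm N)^2 - \Psi^2$, raised to the power dictated by the dimension) and consider the conformally rescaled metrics $\hat g_\pm = \Omega_\pm^{4/(n-2)}\, g$. Using the pseudo-electrostatic equations~\eqref{EEVE1} and~\eqref{EEVE2} together with the inequality $N^2 R \geq 2|d\Psi|^2$, I would verify via the conformal transformation law for scalar curvature that $\hat g_\pm$ both have nonnegative scalar curvature, with equality in the fully electrostatic case. The crucial algebraic identity is that the specific choice of $\Omega_\pm$ is designed precisely so that the $N$-- and $\Psi$-- Laplacian terms combine to cancel.

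Next I would analyze three distinguished regions of $(M^n,\hat g_\pm)$: the asymptotic end, the inner boundary, and (for $\hat g_-$) the locus where $\Omega_-$ vanishes. A careful expansion at infinity using the asymptotically Reissner--Nordstr\"om hypothesis shows that $(M^n,\hat g_+)$ remains asymptotically flat, whereas $(M^n,\hat g_-)$ extends across the vanishing set of $\Omega_-$ to a one-point compactification with the regularity needed for the positive mass theorem. At a nondegenerate static horizon ($N=0$) one of the $\Omega_\pm$ vanishes to the correct order, capping off that component. At a subextremal quasilocal photon sphere the quasilocal data (constancy of $N$, $\Psi$ and $|dN|$, the mean curvature relation, and the normal derivative identities) coming from Propositions~\ref{cmc},~\ref{meanconvex} and~\ref{photonspheregeometry} allow me to glue the $\hat g_+$-side to the compactified $\hat g_-$-side along each such boundary component, with matching induced metric and opposite-signed mean curvature, so the glued seam has at least $C^{1,1}$ regularity.

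Once all boundary components are either capped or glued, I obtain a single complete, asymptotically flat $n$-manifold $\hat M$ with nonnegative scalar curvature. I would then compute its ADM mass from the asymptotic expansion of $\hat g_+$ (the compactified end contributes nothing) and show, by comparison with the Reissner--Nordstr\"om model calculation, that it vanishes. The rigidity case of the Riemannian positive mass theorem in dimension $n$, in the regularity appropriate to such glued manifolds, then forces $\hat M$ to be isometric to Euclidean space. Inverting the conformal rescaling identifies $(M^n,g)$ with a piece of the spatial Reissner--Nordstr\"om manifold of mass $m$ and charge $q$; the bound $m>|q|$ follows because only in the subextremal regime does Reissner--Nordstr\"om accommodate the assumed boundary geometry.

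The main obstacle will be the boundary step. The quasilocal subextremality condition has to be used in an essential way, first to guarantee $\Omega_\pm>0$ in the interior so that both rescalings are globally defined, and then to ensure that the two mean curvatures at a photon sphere seam cancel with the correct sign for $C^{1,1}$ matching. Handling a disconnected inner boundary with a mix of horizon and photon-sphere components -- each requiring its own local treatment within one global construction -- and tracking regularity simultaneously at all seams and at the compactified point of $\hat g_-$, are the principal technical hurdles; the positive mass rigidity theorem then has to be applied in the low-regularity version that tolerates these seams.
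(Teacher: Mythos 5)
Your overall framework (conformal factors built algebraically from $N$ and $\Psi$, nonnegativity of the transformed scalar curvature from the pseudo-electrostatic inequality, one-point compactification of the $\Omega_-$-copy, and rigidity in a low-regularity positive mass theorem) is the right one and matches the paper's. But your treatment of the boundary components contains a genuine gap, and it sits exactly at the step you yourself flag as the main obstacle.

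First, the claim that at a nondegenerate static horizon ``one of the $\Omega_\pm$ vanishes to the correct order, capping off that component'' is incorrect: at $N=0$ one has $\Omega_+=\Omega_-\propto\bigl(1-\widehat C^2\Psi^2\bigr)^{1/(n-2)}$, which must be (and, by the maximum-principle argument for $F_\pm=N-1\pm\widehat C\Psi$ in Lemma~\ref{mandq}, is) strictly positive. The horizon is not capped off; it is the locus where the $\Omega_+$-copy and the $\Omega_-$-copy are glued to each other, which works precisely because the two conformal factors coincide there and the horizon is minimal, yielding a $C^{1,1}$ seam. The only vanishing of $\Omega_-$ is at infinity, which is what produces the one-point compactification.

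Second, and more seriously, the direct gluing of the $\hat g_+$-side to the $\hat g_-$-side along a photon sphere component cannot work: on such a component $N=N_i>0$, so $\Omega_+\neq\Omega_-$ there, and the two induced boundary metrics $\Omega_\pm^{4/(n-2)}\sigma_i$ are homothetic but not equal. The glued object is then not even a continuous Riemannian metric, so no corner version of the positive mass theorem applies, and no sign condition on mean curvatures can repair this. This is exactly why the paper inserts an extra step before any doubling: to each subextremal quasilocal photon sphere component it glues an explicitly constructed neck $\bigl(I_i\times\Sigma_i^{n-1},\gamma_i,\alpha_iN_{m_i,q_i},\alpha_i\Psi_{q_i}+\beta_i\bigr)$, a piece of a Reissner--Nordstr\"om-type traced-electrostatic manifold running from the photon sphere down to a nondegenerate static horizon, matched in a $C^{1,1}$ fashion using the quasilocal identities~\eqref{scalar2} and~\eqref{NandH}. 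Subextremality is what guarantees $m_i^2>q_i^2$, i.e.\ that this neck actually terminates in a horizon. Only after every boundary component has thereby been converted into a horizon can the doubling (equivalently, your two-conformal-factor gluing) be performed. Without the neck construction, or some substitute for it, your argument does not close at the photon sphere components.
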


From Theorem~\ref{meta}, we also immediately get black hole uniqueness in higher-dimensional static asymptotically Reissner--Nordstr\"om electrovacuum:

\begin{cor}
 Let $\left(M^n, g, N, \Psi \right)$ be (pseudo-)electrostatic and asymptotically Reissner--Nordstr\"om (with mass $m$ and charge $q$). 
Assume that $\partial M^n$ is a (possibly disconnected) nondegenerate static horizon. 
 Then $\left(M^n, g \right)$ is isometric to the region of Reissner--Nordstr\"om manifold of mass $m$ and charge $q$ which is outside the horizon, and $m>|q|$.
\end{cor}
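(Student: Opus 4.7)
My plan is to obtain this corollary as a direct specialization of Theorem~\ref{meta}; almost all of the work has already been done there, and what remains is to check that the hypotheses match and to identify which ``piece'' of Reissner--Nordström arises.

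First I would verify that a nondegenerate static horizon provides an orientable inner boundary of the type allowed by Theorem~\ref{meta}. By definition a static horizon is a component of the zero set $\{N=0\}$, and nondegeneracy means $dN$ does not vanish there. Hence each such component is a smooth regular level set of $N$, carries the global non-vanishing unit normal obtained by normalizing $\nabla N$, and is therefore orientable. Compactness is built into the notion of an inner boundary in the asymptotically Reissner--Nordström setting. The dichotomy in Theorem~\ref{meta} between nondegenerate static horizons and subextremal quasilocal photon spheres is then satisfied trivially, with every component falling into the first class by hypothesis. Together with the (pseudo-)electrostatic and asymptotically Reissner--Nordström assumptions, this puts us exactly in the setup of Theorem~\ref{meta}, which delivers an isometry of $(M^n,g)$ with a piece of the Reissner--Nordström manifold of mass $m$ and charge $q$ and the subextremality inequality $m > |q|$.

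The only remaining step is to pin down which piece is realized. Under the isometry, the asymptotically flat end of $M^n$ is matched with the asymptotic end of the subextremal Reissner--Nordström manifold, while each connected component of $\partial M^n$ is mapped to a hypersurface on which the lapse $N$ vanishes. In subextremal Reissner--Nordström the set $\{N=0\}$ is, by a direct inspection of the explicit static potential, precisely the (connected) event horizon; consequently the image of $M^n$ is exactly the exterior region of that horizon. I do not expect any genuine obstacle here, since the argument is essentially bookkeeping; the one point that requires care is this last identification, which relies only on the explicit form of $N$ in subextremal Reissner--Nordström and on the fact that $M^n$ has a single asymptotic end.
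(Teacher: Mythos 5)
Your proposal is correct and follows the same route as the paper, which obtains this corollary as an immediate specialization of Theorem~\ref{meta} to the case where every boundary component is a nondegenerate static horizon. Your extra bookkeeping (orientability via the normalized gradient of $N$, compactness from the inner-boundary setup, and the identification of the ``piece'' as the exterior region since $\{N_{m,q}=0\}$ is the outer horizon in subextremal Reissner--Nordstr\"om) is consistent with how the proof of Theorem~\ref{meta} actually concludes.
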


This corollary is also a result of~\cite{bubble}, where it was proven under the additional assumption that $m>|q|$. (Note, however, that it is known at least since~\cite{Chrusciel}, where the $3+1$ case was treated, that the mass-charge inequality can be dropped from the assumptions of static electrovacuum uniqueness results.)

The above black hole uniqueness result actually does not require the full electrostatic equations; the pseudo-electrostatic conditions are sufficient 
(in contrast to~\cite{bubble} and most other black hole uniqueness proofs, where the full (electro-)static equations are explicitly required, with the exception of~\cite{carlahigh}).

The proof of Theorem~\ref{meta} uses seminal ideas from the classical black hole uniqueness proofs by Bunting and Masood-ul Alam~\cite{bunting} and by Ruback~\cite{ruback}, 
which were generalized in~\cite{cedergal} and~\cite{cedergalelectro} to prove photon sphere uniqueness results, as well as on the techniques that were developed in~\cite{carlahigh} (see also~\cite{mfo}) to treat higher-dimensional cases.
Its main part breaks down into the following steps: in the first step, we glue to each boundary component an explicitly constructed Riemannian manifold resembling a suitable piece of a Reissner--Nordstr\"om manifold up to a static horizon. 
The horizon allows to reflect the manifold in a second step along its boundary, obtaining an ``upper'' and a ``lower'' half. 
Both the gluing and the doubling can be done with $C^{1,1}$-regularity. 

In a third step, we perform a conformal change of the doubled manifold such that the conformally transformed upper half has vanishing ADM mass, and the conformally transformed lower half can be one-point compactified (with $C^{1,1}$-regularity). 
This will allow to apply a low regularity version of the rigidity case of the positive mass theorem to conclude that the conformally transformed manifold is the Euclidean space. 
In a fourth and last step, uniqueness will be established through some topological arguments and by recovering the conformal factor applying a maximum principle to an elliptic PDE. 

The paper is organized as follows: In Section \ref{setup}, we recall some definitions and known facts about the $n+1$-dimensional Reissner--Nordstr\"om spacetime and about 
asymptotically flat, static electrovacuum spacetimes in general and introduce our asymptotic assumptions. 
In Section~\ref{photonspheres}, we give a definition of photon spheres that is adjusted to our setting and prove some statements about their geometry, which are interesting in their own right. 
Section~\ref{zeromassetc} provides the prerequisites for the conformal change we need to perform in the third step. 
In Section~\ref{masscharge}, we prove the assertions in Theorem~\ref{meta} about mass and charge. Section~\ref{mainproof}  presents the above sketched four steps of the proof of Theorem~\ref{main}.

\section{Setup and definitions}\label{setup}
\subsection{The $n+1$-dimensional Reissner--Nordstr\"om spacetime}

The $n+1$-dimensional Reissner--Nordstr\"om spacetime of mass $m$ and charge $q$ is the manifold $(\mathbb R\times \mathbb R^n\setminus\{0\}, \mathfrak g_{m,q})$, where 
the metric $ \mathfrak g_{m,q}$ is given by
\begin{equation}
 \mathfrak g_{m,q}= -(1-\tfrac{2m}{r^{n-2}}+\tfrac{q^2}{r^{2(n-2)}})dt^2+(1-\tfrac{2m}{r^{n-2}}+\tfrac{q^2}{r^{2(n-2)}})^{-1}dr^2+r^2\Omega_{n-1},\label{RNcoord}
 \end{equation}
and $\Omega_{n-1}$ denotes the standard metric on $\mathbb S^{n-1}$. 
The $n$-dimensional (spatial) Reissner--Nordstr\"om manifold is a canonical spatial slice of the Reissner--Nordstr\"om spacetime, that is, the manifold $\mathbb R^n\setminus\{0\}$ with the metric 

\begin{equation*}
 g_{m,q}=(1-\tfrac{2m}{r^{n-2}}+\tfrac{q^2}{r^{2(n-2)}})^{-1}dr^2+r^2\Omega_{n-1}.
 \end{equation*}

The \emph{lapse} $N_{m,q}$ and the \emph{potential} $\Psi_q$
of the $n$-dimensional Reissner--Nordstr\"om manifold of  mass $m$ and  charge $q$ are the functions 

\begin{align*}
 N_{m,q} (r)&\coloneqq (1-\tfrac{2m}{r^{n-2}}+\tfrac{q^2}{r^{2(n-2)}})^{1/2}, \\
  \Psi_q (r) &\coloneqq \frac{q}{\widehat C r^{n-2}}
\end{align*}

with 
\[
 \widehat C\coloneqq \sqrt{2\frac{n-2}{n-1}}. 
\]

\begin{rem}\label{RNiso}
In isotropic coordinates, $g_{m,q}$ can be written as
\begin{equation}\label{iso}
g_{m,q}=\left(1+\frac{m+q}{2s^{n-2}}\right)^{\frac{2}{n-2}}\left(1+\frac{m-q}{2s^{n-2}}\right)^{\frac{2}{n-2}}\delta\eqdef  \varphi_{m,q}^{\frac{2}{n-2}}\delta,
\end{equation}
where the radial coordinates $s$ and $r$ transform by the rule
\[
 r=s\left(1+\frac{m+q}{2s^{n-2}}\right)^{\frac{1}{n-2}}\left(1+\frac{m-q}{2s^{n-2}}\right)^{\frac{1}{n-2}}.
\]

We can rewrite the lapse and the potential as 
\begin{align*}
 N_{m,q} (s)&=  \frac{\left( 1-\frac{m^2-q^2}{4s^{2(n-2)}}\right) }{\left( 1+\frac{m+q}{2s^{n-2}}\right)\left( 1+\frac{m-q}{2s^{n-2}}\right)},\\
  \Psi_q (s) &= \frac{q}{\widehat C s^{n-2}\left(1+\frac{m+q}{2s^{n-2}}\right)\left(1+\frac{m-q}{2s^{n-2}}\right)}.
\end{align*}

A straightforward computation allows to express $ \varphi_{m,q}$ in terms of $N_{m,q} $ and $\Psi_q $ as
\begin{equation}\label{varphi}
   \varphi_{m,q}=\left(\frac{(N_{m,q} +1)^2-\widehat C^2 \Psi_q^2}{4}\right)^{-1}.
\end{equation}

\end{rem}

In the coordinates of~\eqref{RNcoord}, the outer horizon of the Reissner-Nordstr\"om black hole of  mass $m>0$ and  charge $q$ with $m^2>q^2$
is located at $\left(m+\sqrt{m^2-q^2}\right)^{\frac{1}{n-2}}$. In isotropic coordinates, the location of the outer horizon is at $s_{m, q}\defeq \left(\frac{m^2-q^2}{4}\right)^{\frac{1}{2(n-2)}}$.

\subsection{(Pseudo-)electrostatic spacetimes}\label{electrosection}

The above introduced $n+1$-dimensional Reissner--Nordstr\"om spacetime is the paradigmatic example of a static electrovacuum spacetime. 

Since we will be working in $n$-dimensional spatial slices, it is more convenient to use the dimensionally reduced Einstein--Maxwell equations: 

\begin{defi}
 Let $(M^n, g)$ be a Riemannian manifold and let $N: M^n\rightarrow \mathbb R_{>0}$, $\Psi: M^n\rightarrow \mathbb R$ be smooth functions such that 
\begin{align}
\Delta N &=\frac{\widehat C^2}{N}|d\Psi|^2, \label{EEVE1}\\
0 &=\operatorname{div} \left(\frac{\operatorname{grad}\Psi}{N}\right),\label{EEVE2} \\
N\operatorname{Ric} &=\nabla^2 N-2\frac{d\Psi\otimes d\Psi}{N}+\frac{2}{(n-1)N}|d\Psi|^2g.\label{EEVE3}
\end{align}
 Then $(M^n, g, N, \Psi)$ is called an \emph{electrostatic system}. 
\end{defi}
Here and onwards, $\operatorname{Ric}$ and $R$ denote the Ricci tensor and the scalar curvature of $\left(M^n, g\right)$.

Taking the trace of Equation~\eqref{EEVE3} and plugging in Equation~\eqref{EEVE1}, one obtains
\begin{align}\label{EEVEtrace}
 N^2R=2|d\Psi|^2. 
\end{align}

\begin{defi}\label{almost}
 If $(M^n, g, N, \Psi)$ fulfills Equations~\eqref{EEVE1}, ~\eqref{EEVE2}, and~~\eqref{EEVEtrace}, but not necessarily ~\eqref{EEVE3}, it is called an \emph{traced-electrostatic system}. 
If Equations~\eqref{EEVE1}, ~\eqref{EEVE2} and the inequality 
\begin{equation*}\label{EEVEineq}
N^2R\geq 2|d\Psi|^2
\end{equation*}
are fulfilled, we say that the system is \emph{pseudo-electrostatic}.
\end{defi}

\subsection{Asymptotic considerations}

\begin{rem}[Weighted norms] We will use weighted norms defined as follows: 
\[ \|f\|_{C^2_{-k}(U)}\defeq\sup\limits_{x\in U}||x|^k \cdot |f(x)|+|x|^{k+1} \cdot |D f(x)|+|x|^{k+2} \cdot |D^2f(x)||\] for a twice differentiable function $f$ on an open domain $U\subseteq \mathbb R^n$. 
\end{rem}

We will use the following definition of asymptotically Reissner--Nordstr\"om manifolds: 

\begin{defi}\label{asyiso}
A smooth Riemannian manifold $(M^n, g)$ of dimension $n\geq 3$ is called \emph{asymptotically Reissner-Nordstr\"om} of  mass $m$ and  charge $q$ if 
\begin{enumerate}
\item $M^n$ is diffeomorphic to $K\sqcup E$, where $K$ is a compact set, and $E$ is an \emph{asymptotic end} which is diffeomorphic to $\mathbb R^n\setminus \overline{B^n_S(0)}$ for some $S>s_{m,q}$, 

\item for the diffeomorphism $\Phi=(x^i): E^n\rightarrow \mathbb R^n\setminus \overline{B^n_S(0)}$ and the metric $g$ there is a constant $C>0$ such that 
\[\|(\Phi_\ast g)_{ij}-(g_{m,q})_{ij}\|_{C^2_{-(n-1)}(\mathbb R^n\setminus \overline{B^n_S(0)})} \leq C, \ i,j=1, \dots , n,
\]
\item $\Phi_\ast g$ is uniformly positive definite and uniformly continuous on 
$\mathbb R^n\setminus \overline{B^n_S(0)}$. 
\end{enumerate}
Here, $(g_{m,q})_{ij}$ are the components of the Reissner-Nordstr\"om metric in isotropic coordinates, see Equation~\eqref{iso}. 
\end{defi}

\begin{nota} We will often notationally omit $\Phi$ and $\Phi_\ast$ whenever this does not lead to ambiguity. Moreover, we use the coordinates $\left(x^i\right)$ defined by $\Phi$ 
to define a radial coordinate $s\defeq \sqrt{\sum\limits_{i=1}^n|x^i|^2}$ on the asymptotic end. 
\end{nota}

\begin{defi}\label{asyrnfct}
Let $(M^n, g)$  be an asymptotically Reissner--Nordstr\"om manifold of  mass $m$ and  charge $q$ with an asymptotic end $E^n$, and $\Phi=(x^i): E^n\rightarrow \mathbb R^n\setminus \overline{B^n_S(0)}$
a diffeomeorphism as in Definition~\ref{asyiso}. 
A smooth function $N: M^n\rightarrow \mathbb R$ is called an \emph{asymptotic Reissner--Nordstr\"om lapse (of  mass $m$)} if there is a constant $C$ such that
\[
\| \Phi_\ast N- N_{m,q}\|_{C^2_{-(n-1)}(\mathbb R^n\setminus \overline{B^n_S(0)})} \leq C
\]
for some (hence, all) $q\in\mathbb R$.  

A smooth function $\Psi: M^n\rightarrow \mathbb R$ is called an \emph{asymptotic Reissner--Nordstr\"om potential (of  charge $q$)} if there is a constant $C$ such that
\[
\| \Phi_\ast \Psi- \Psi_q\|_{C^2_{-(n-1)}(\mathbb R^n\setminus \overline{B^n_S(0)})} \leq C.
\]

A quadruple $(M^n, g, N, \Psi)$ is called an \emph{asymptotically Reissner--Nordstr\"om system} if $(M^n, g)$ an asymptotically Reissner--Nordstr\"om manifold of 
mass $m$ and  charge $q$, $N: M^n\rightarrow\mathbb \R$ is an asymptotic Reissner--Nordstr\"om lapse of the same mass $m$, and $\Psi: M^n\rightarrow \mathbb R$ is an asymptotic Reissner--Nordstr\"om potential of the same charge $q$. 

\end{defi}

\section{Quasilocal geometry}\label{photonspheres}
We cite the following fundamental theorem:
	
\begin{thm}[\cite{CVE, perlick}]
  A timelike hypersurface $P$ in a Lorentzian manifold is totally umbilical if and only if every lightlike geodesic that is initially tangent to $P$ stays tangent to $P$ for as long as it exists. 
\end{thm}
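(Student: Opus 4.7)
The plan is to leverage the Gauss formula together with an algebraic lemma about symmetric bilinear forms on a Lorentzian vector space. Throughout, let $\mathrm{II}$ denote the second fundamental form of $P$ with respect to a (smooth local) unit spacelike normal $\nu$, and let $\nabla$ and $\nabla^P$ be the Levi-Civita connections of the ambient metric and the (Lorentzian) induced metric on $P$. For any curve $\gamma$ in $P$ one has the Gauss identity
\begin{equation*}
\nabla_{\dot\gamma}\dot\gamma \;=\; \nabla^P_{\dot\gamma}\dot\gamma \,+\, \mathrm{II}(\dot\gamma,\dot\gamma)\,\nu.
\end{equation*}

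For the ``only if'' direction, assume $P$ is totally umbilical, i.e.\ $\mathrm{II} = \lambda\, g|_P$ for some function $\lambda$ on $P$. If $X \in T_pP$ is lightlike then $\mathrm{II}(X,X) = \lambda\, g(X,X) = 0$, so the intrinsic null geodesic $\gamma$ of $P$ with $\dot\gamma(0)=X$ satisfies $\nabla_{\dot\gamma}\dot\gamma = 0$ as well. By uniqueness of geodesics in the ambient manifold, $\gamma$ coincides with the ambient null geodesic through $p$ with initial velocity $X$; hence that ambient geodesic stays tangent to $P$.

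For the ``if'' direction, suppose every lightlike ambient geodesic initially tangent to $P$ stays in $P$. Pick $p \in P$ and any null $X \in T_pP$, and compare the ambient null geodesic $\gamma$ through $p$ with initial velocity $X$ (which lies in $P$ by hypothesis) with the intrinsic null geodesic of $P$ having the same initial data. The Gauss formula, together with uniqueness of solutions to the geodesic ODE on $P$, forces $\mathrm{II}(\dot\gamma,\dot\gamma) \equiv 0$ along $\gamma$; evaluating at $p$ yields $\mathrm{II}(X,X) = 0$ for every null $X \in T_pP$. The remaining, purely algebraic step is:

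\begin{lem*}
Let $(V, g)$ be a Lorentzian vector space of dimension $\geq 2$ and let $h$ be a symmetric bilinear form on $V$ with $h(X,X) = 0$ for every null $X \in V$. Then $h = \lambda g$ for some $\lambda \in \mathbb R$.
\end{lem*}

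This is proved by picking an orthonormal basis $e_0, e_1, \dots, e_{n-1}$ with $e_0$ timelike, applying $h$ to the null vectors $e_0 \pm e_i$ to deduce $h(e_0,e_i) = 0$ and $h(e_i,e_i) = -h(e_0,e_0)$, and then testing on $e_0 + \tfrac{1}{\sqrt 2}(e_i + e_j)$ for $i \neq j$ to obtain $h(e_i,e_j) = 0$; setting $\lambda \defeq -h(e_0,e_0)$ gives $h = \lambda g$. Applying this pointwise to $\mathrm{II}$ on $T_pP$ shows $P$ is totally umbilical. The only nontrivial step is this algebraic polarization, which hinges on having enough independent null directions in a Lorentzian space; I expect no serious obstacle since the timelike nature of $P$ (hence the Lorentzian signature of the induced metric) is exactly what supplies those null vectors.
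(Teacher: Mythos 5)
Your proof is correct; note that the paper does not prove this statement but only cites it (it is Theorem~4, attributed to Claudel--Virbhadra--Ellis and Perlick), and your argument is essentially the standard one from those references: the Gauss formula reduces both directions to the vanishing of $\mathrm{II}$ on null vectors, and the polarization lemma (a symmetric bilinear form vanishing on the light cone of a Lorentzian vector space is a multiple of the metric) upgrades this to umbilicity. The one point worth making explicit in the ``only if'' direction is that the causal character of $\dot\gamma$ is preserved along the \emph{intrinsic} geodesic (since $g(\dot\gamma,\dot\gamma)$ is constant for a geodesic of the induced metric), so that $\mathrm{II}(\dot\gamma,\dot\gamma)=\lambda\, g(\dot\gamma,\dot\gamma)$ stays zero along the whole curve and not just at $t=0$; your computation of $h(e_i,e_j)$ via the null vector $e_0+\tfrac{1}{\sqrt2}(e_i+e_j)$ checks out.
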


We remind the reader that a submanifold is called totally umbilical if the trace-free part of its second fundamental form vanishes.
 
The above theorem motivates the following definition, following~\cite{CVE, yazalazov, cederphoto}: 
\begin{defi}\label{photonspheredefi}
Let $(M^n, g, N, \Psi)$ be a pseudo-electrostatic system (see Definition~\ref{almost}). 

  A timelike embedded orientable hypersurface $P^n$ in $ (\mathbb R\times M^n, -N^2dt^2+g)$ is called a \emph{photon sphere} 
  if it is totally umbilical 
	and $N$ and $\Psi$ are constant on every connected component of $P^n$. 
  \end{defi}
It is immediate that $P^n\cap M^n$ is totally umbilical in $M^n$. If $P^n$ is a photon sphere, we will occasionally also refer to  $P^n\cap M^n$ as a photon sphere. 
 
In the $n+1$-dimensional Reissner--Nordstr\"om spacetime with mass parameter $m>0$, there is a photon sphere located at the radius 
\[ \left(\frac 1 2 mn+\frac 1 2\sqrt{m^2n^2-4(n-1)q^2}\right)^{\tfrac{1}{n-2}}\] (in the coordinates of~\eqref{RNcoord}), provided that the mass-charge ratio is such that $m^2n^2-4(n-1)q^2$ is nonnegative (which is always the case in subextremal Reissner--Nordstr\"om spacetime), while a Reissner--Nordstr\"om spacetime of negative or vanishing mass does not possess a photon sphere.

For the rest of this paper, we fix the following notation: 
\begin{nota}
 For a photon sphere $(P^n, p)\hookrightarrow (\mathbb R\times M^n, -N^2dt^2+g)$ of an pseudo-electrostatic system $(M^n, g, N, \Psi)$, we write 
\[
(P^n,p)=
=\bigcup\limits_{i=1}^l (\mathbb R\times \Sigma_i^{n-1}, -N_i^2 dt^2+\sigma_i),
\]
where each $\mathbb R\times \Sigma_i^{n-1}$ is a connected component of $P^n$. 

We define 
\begin{align*}
 N_i\defeq N\restriction_{\Sigma_i^{n-1}},\\
 \Psi_i\defeq \Psi\restriction_{\Sigma_i^{n-1}}. 
\end{align*}

Moreover, $\mathfrak H$ denotes the mean curvature of $\bigcup\limits_{i=1}^l\left(\mathbb R\times \Sigma_i^{n-1}\right)$, while $H$ denotes the mean curvature of $\bigcup\limits_{i=1}^l \Sigma_i^{n-1}$ in $M^n$, and we set

\begin{align*}
 \mathfrak H_i&\defeq \mathfrak H\restriction_{\mathbb R\times \Sigma_i^{n-1}},\\
 H_i&\defeq  H\restriction_{ \Sigma_i^{n-1}}.
\end{align*}

A choice of unit normal to $\bigcup\limits_{i=1}^l\Sigma_i^{n-1}$ in $M^n$ (pointing towards the asymptotic end if $(M^n, g)$ is asymptotically flat) will be denoted by $\nu$, and we set
\begin{align*}
 \nu (N)_i &\defeq \nu(N)\restriction_{\Sigma_i^{n-1}}\\
 \nu (\Psi)_i&\defeq \nu(\Psi)\restriction_{\Sigma_i^{n-1}}. 
\end{align*}
\end{nota}

Photon spheres in the electrostatic setting are characterized by quasilocal properties which make them a \emph{quasilocal photon sphere} defined as follows: 

\begin{defi}\label{quasiphoton}
 A totally umbilical hypersurface in a (pseudo-)electrostatic system $\left(M^n, g, N, \Psi\right)$ that fulfills $\nu(N)_i>0$, $N_i=const.$, $R_{\sigma_i}=const.>0$, $H_i=const.$ is called a \emph{quasilocal photon sphere component}
if the equations
\begin{equation}\label{scalar2}
R_{\sigma_i}=\frac{n}{n-1}H_i^2+\frac{2}{N_i^2}\nu(\Psi)_i^2. 
\end{equation}
and
\begin{equation}\label{NandH}
     \frac{H_i}{N_i}\nu(N)_i=\frac{H_i^2}{n-1}
    \end{equation}
are fulfilled. 
\end{defi}

The following three propositions serve to show that a photon sphere component in an electrostatic system is a quasilocal photon sphere component. 
While Propositions~\ref{cmc} and~\ref{meanconvex} are straightforward generalizations from the $3+1$-dimensional setting (see~\cite{yazalazov} and~\cite{cederphoto} for Proposition~\ref{cmc} and ~\cite{cedergal} and~\cite{cedergalelectro} for Proposition~\ref{meanconvex}), we will prove Proposition~\ref{photonspheregeometry} for our setting. If $\Psi=0$, the equations reduce to the ones given for the curvature quantities of photon spheres in~\cite{carlahigh}. 
For a similar proof in dimension $3$, see also~\cite{yazalazov}.

\begin{prop}[\cite{yazalazov}, \cite{cederphoto}]\label{cmc}
Let $(M^n, g, N, \Psi)$ be an electrostatic system and $P^n$ a photon sphere in $ (\mathbb R\times M^n, \mathfrak g\defeq -N^2dt^2+g)$ with induced metric $p$.
Then for every $1\leq i\leq l$, $\mathfrak H_i$ and $H_i$ are constant. 
\end{prop}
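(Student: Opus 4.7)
The plan is to derive two independent expressions for $\operatorname{Ric}_{\mathfrak g}(\nu, Y)$, where $\nu$ denotes the spacelike unit normal to $P^n$ in $(\mathbb{R} \times M^n, \mathfrak g)$ and $Y \in T\Sigma_i^{n-1}$, and then to equate them. The first will come from the Codazzi--Mainardi identity for $P^n$ combined with its umbilicity, and will be proportional to $Y(\mathfrak H)$; the second will come from the electrostatic equations combined with the fact that $\Psi$ is constant on $\Sigma_i^{n-1}$, and will vanish. Equating the two forces $Y(\mathfrak H) = 0$, while constancy in the timelike direction is automatic because both $\mathfrak g$ and $P^n$ are $\partial_t$-invariant.

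For the geometric side, I would write umbilicity as $\mathfrak h = \tfrac{\mathfrak H}{n}\, p$ and substitute into the Codazzi--Mainardi identity
\[
R_{\mathfrak g}(X, Y, Z, \nu) = (\nabla^P_X \mathfrak h)(Y, Z) - (\nabla^P_Y \mathfrak h)(X, Z), \quad X, Y, Z \in TP^n,
\]
then trace over $X$ and $Z$ against $p^{-1}$. Taking due care of the Lorentzian signature of $p$, a short index computation produces
\[
\operatorname{Ric}_{\mathfrak g}(\nu, Y) = \tfrac{n-1}{n}\, Y(\mathfrak H)
\]
for $Y \in TP^n$.

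For the electrostatic side, I would use the standard static decomposition $\operatorname{Ric}_{\mathfrak g}(X, Y) = \operatorname{Ric}(X, Y) - \tfrac{1}{N}\nabla^2 N(X, Y)$, valid for spatial $X, Y$, and combine it with~\eqref{EEVE3} to obtain
\[
\operatorname{Ric}_{\mathfrak g}(X, Y) = \frac{2|d\Psi|^2}{(n-1)N^2}\, g(X, Y) - \frac{2}{N^2}\, d\Psi(X)\, d\Psi(Y).
\]
Setting $X = \nu$ and $Y \in T\Sigma_i^{n-1}$, both $g(\nu, Y)$ and $d\Psi(Y)$ vanish (the latter since $\Psi$ is constant on $\Sigma_i^{n-1}$), so $\operatorname{Ric}_{\mathfrak g}(\nu, Y) = 0$. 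Comparing with the Codazzi expression yields $Y(\mathfrak H) = 0$, and hence $\mathfrak H_i$ is constant on each $\Sigma_i^{n-1}$. Finally, the purely spatial part of the umbilicity condition gives $h = \tfrac{\mathfrak H}{n}\,\sigma$ on $\Sigma_i^{n-1}$, so $H = \tfrac{n-1}{n}\,\mathfrak H$, and $H_i$ is constant too. The only real bookkeeping issue lies in tracing the Codazzi identity against the Lorentzian $p^{-1}$ and matching sign conventions for $\mathfrak h$; beyond that, the argument is a one-shot calculation that works uniformly in $n \geq 3$.
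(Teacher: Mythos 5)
Your argument is correct and is precisely the standard proof from the cited sources: the paper itself states Proposition~\ref{cmc} without proof (referring to~\cite{yazalazov, cederphoto} as a straightforward generalization of the $3+1$ case), and the contracted Codazzi identity under umbilicity giving $\mathfrak{Ric}(\nu,Y)=\tfrac{n-1}{n}Y(\mathfrak H)$, combined with $\mathfrak{Ric}(\nu,Y)=0$ from the static decomposition plus Equation~\eqref{EEVE3} and $d\Psi\restriction_{TP^n}=0$, is exactly that argument. The final reduction $nH_i=(n-1)\mathfrak H_i$ also matches the identity the paper uses later in the proof of Proposition~\ref{photonspheregeometry}, so nothing is missing.
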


\begin{prop}[\cite{galmiao, cedergal, cedergalelectro}]\label{meanconvex}
Let $P^n$ be a quasilocal photon sphere in $ (\mathbb R\times M^n, \mathfrak g\defeq -N^2dt^2+g)$, and assume that  $ (\mathbb R\times M^n, \mathfrak g)$ fulfills the null energy condition. Then $H_i>0$. 
\end{prop}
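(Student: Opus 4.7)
I will show $H_i>0$ in two stages: first $H_i\geq 0$ from the quasilocal photon sphere relations alone, and then $H_i\neq 0$ via the NEC. For the first stage, equation \eqref{NandH} reads $H_i\nu(N)_i/N_i=H_i^2/(n-1)$; the right-hand side is nonnegative, and $\nu(N)_i>0$, $N_i>0$ by the definition of a quasilocal photon sphere, so a negative $H_i$ would make the left-hand side strictly negative, a contradiction. Thus $H_i\geq 0$ without any use of the NEC, and the content of the proposition reduces to excluding the borderline case $H_i=0$.

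Suppose for contradiction that $H_i=0$ on some component $\Sigma_i$. Then umbilicity of $\Sigma_i$ in $(M^n,g)$ together with the vanishing of the mean curvature makes $\Sigma_i$ totally geodesic in $M^n$, and \eqref{scalar2} combined with $R_{\sigma_i}>0$ forces $\nu(\Psi)_i\neq 0$. Regarded as a codimension-two spacelike surface in the static spacetime $(\mathbb R\times M^n,\mathfrak g)$, the surface $\Sigma_i$ has future-directed null normals $k_\pm:=\frac{1}{N_i}\partial_t\pm\nu$; a short computation, using that $\partial_t$ is a hypersurface-orthogonal Killing field (which kills the $\frac{1}{N_i}\partial_t$-contribution to the null expansion) together with total geodesy of $\Sigma_i$ in $M^n$, yields the null expansions $\theta_{k_\pm}=\pm H_i=0$. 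Hence $\Sigma_i$ is marginally trapped with respect to both null normals.

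At this point the NEC enters via a Raychaudhuri-based rigidity argument in the spirit of~\cite{galmiao, cedergal}. Propagating $\Sigma_i$ along the null congruence with initial tangent $k_+$ and using
\begin{equation*}
\frac{d\theta_{k_+}}{d\lambda}=-\mathrm{Ric}^{\mathfrak g}(k_+,k_+)-\frac{\theta_{k_+}^2}{n-1}-|\sigma|^2,
\end{equation*}
the NEC forces $\theta_{k_+}$ to strictly decrease past $\Sigma_i$ unless the shear $\sigma$ and $\mathrm{Ric}^{\mathfrak g}(k_+,k_+)$ vanish identically along the generators. Combined with the stationarity of the configuration (the Killing flow of $\partial_t$ preserves both $\Sigma_i$ and the null congruence emanating from it) and compactness of $\Sigma_i$, the rigidity version of this dichotomy identifies the swept-out null hypersurface as a Killing horizon of $\partial_t$; in a static spacetime this forces $\mathfrak g(\partial_t,\partial_t)=-N^2=0$ on that horizon, contradicting the strict positivity of $N$ on $M^n$ built into the definition of a pseudo-electrostatic system. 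The main technical obstacle is precisely this rigidity step: turning the infinitesimal Raychaudhuri/NEC information into the global Killing-horizon statement requires completeness of the null generators and careful use of the Killing symmetry, and amounts essentially to transplanting the corresponding argument of~\cite{galmiao, cedergal} to the higher-dimensional pseudo-electrostatic context.
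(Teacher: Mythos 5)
The paper does not actually prove Proposition~\ref{meanconvex}; it imports it from \cite{galmiao,cedergal,cedergalelectro}, so your argument has to stand on its own. Your first stage does: since $\nu(N)_i>0$ and $N_i>0$, Equation~\eqref{NandH} forces $H_i\in\{0,\,(n-1)\nu(N)_i/N_i\}$, hence $H_i\geq 0$, and the whole content of the proposition is the exclusion of $H_i=0$. Your computation of the null expansions $\theta_{k_\pm}=\pm H_i$ for the cross-section $\Sigma_i$ of the static cylinder is also correct.

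The second stage, however, has a genuine gap. First, the conclusion of your ``rigidity'' step is geometrically impossible: along $\Sigma_i$ the null hypersurface $\mathcal N$ swept out by the $k_+$-geodesics has tangent space $T\Sigma_i\oplus\mathbb R k_+=k_+^{\perp}$, and since $\mathfrak g(k_+,\nu)=1\neq 0$ while $\partial_t=N(k_+-\nu)$, the Killing field $\partial_t$ is \emph{transverse} to $\mathcal N$; so $\mathcal N$ cannot be a Killing horizon of $\partial_t$, and the flow of $\partial_t$ does not preserve $\mathcal N$ (it translates it to the congruence issuing from $\{t\}\times\Sigma_i$). Thus even if the Raychaudhuri dichotomy were fully established, it would not produce the contradiction with $N>0$ that you claim. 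Second, and more fundamentally, the step you yourself label ``the main technical obstacle'' is the entire proposition: Raychaudhuri with $\theta_{k_+}=0$ and the NEC only yields $\theta_{k_+}\leq 0$ along the generators, which is not by itself a contradiction, and no purely local argument can be: some global input is indispensable. The proofs in \cite{galmiao,cedergal,cedergalelectro} use precisely that $\Sigma_i$ lies in the region visible from the asymptotically flat end, where a compact weakly outer trapped surface ($\theta_{k_+}\leq 0$) cannot exist in a spacetime satisfying the NEC (achronal-boundary/focusing argument toward infinity, or comparison with the large untrapped coordinate spheres in the time-symmetric slice). Your proposal never invokes asymptotic flatness, so this essential ingredient is missing; to close the argument you should replace the Killing-horizon claim by the observation that $H_i=0$ would make $\Sigma_i$ a marginally outer trapped surface in the domain of outer communications, and then prove or cite the non-existence of such surfaces there.
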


\begin{prop}\label{photonspheregeometry}
Let $(M^n, g, N, \Psi)$ be an electrostatic system and let $(P^n,p)$ be a photon sphere in $(\mathbb R\times M^n, -N^2dt^2+g)$. 

Then Equation \eqref{scalar2} holds. In particular, $R_{\sigma_i}$ is nonnegative, and it is positive provided that $H_i\neq 0$.

Moreover, Equation~\eqref{NandH} holds, 
 $H_i$ and  $\nu (N)_i$ are constant; and $R_{\sigma_i}$  is constant if and only if $\nu(\Psi)_i$ is. 
\end{prop}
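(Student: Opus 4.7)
The plan is to derive Equation~\eqref{NandH} directly from the total umbilicity of $P^n$ in the static spacetime, and then obtain Equation~\eqref{scalar2} by combining the Gauss equation for $\Sigma_i^{n-1}\hookrightarrow M^n$ with the electrostatic identities~\eqref{EEVE1}, \eqref{EEVE3}, and~\eqref{EEVEtrace}. The constancy claims then drop out of~\eqref{NandH}, \eqref{scalar2}, and the already cited Proposition~\ref{cmc}.

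To get~\eqref{NandH}, fix one connected component $\mathbb R\times \Sigma_i^{n-1}$ and work in the adapted Lorentzian orthonormal frame $e_0=\tfrac{1}{N_i}\partial_t,\,e_1,\dots,e_{n-1}$ with $e_1,\dots,e_{n-1}$ orthonormal in $T\Sigma_i^{n-1}$. Staticity yields $\nabla^{\mathfrak g}_{\partial_t}\partial_t=N\operatorname{grad}N$ and hence $\mathfrak h(\partial_t,\partial_t)=-N\nu(N)$ for the second fundamental form $\mathfrak h$ of $P^n$, while the tangential-tangential block of $\mathfrak h$ agrees with the second fundamental form $h$ of $\Sigma_i^{n-1}$ in $M^n$. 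Writing total umbilicity as $\mathfrak h=\lambda\, p$, the spatial trace gives $H_i=(n-1)\lambda$ and the timelike component gives $\nu(N)_i/N_i=\lambda$; eliminating $\lambda$ yields exactly~\eqref{NandH}. Since $H_i$ is constant by Proposition~\ref{cmc} and $N_i$ is constant by the photon-sphere definition, the relation $\nu(N)_i=N_iH_i/(n-1)$ makes $\nu(N)_i$ constant on $\Sigma_i^{n-1}$ as well.

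For~\eqref{scalar2}, I would apply the Gauss equation
\[
R_{\sigma_i}=R-2\operatorname{Ric}(\nu,\nu)+H_i^2-|h|^2,
\]
using $|h|^2=H_i^2/(n-1)$ by umbilicity. On $\Sigma_i^{n-1}$, constancy of $N$ and $\Psi$ gives $\operatorname{grad}N=\nu(N)_i\,\nu$, $|dN|^2=\nu(N)_i^2$, and $|d\Psi|^2=\nu(\Psi)_i^2$; differentiating $\operatorname{grad}N$ tangentially produces the Hessian block $\nabla^2 N(X,Y)=\nu(N)_i\,h(X,Y)$ for $X,Y\in T\Sigma_i^{n-1}$. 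Taking the full trace gives $\Delta N=\nabla^2 N(\nu,\nu)+\nu(N)_iH_i$; solving for $\nabla^2 N(\nu,\nu)$ via~\eqref{EEVE1} and inserting into~\eqref{EEVE3} evaluated on $(\nu,\nu)$ produces a cancellation that relies on $\widehat C^2=2(n-2)/(n-1)$ and leaves the clean identity $\operatorname{Ric}(\nu,\nu)=-\nu(N)_iH_i/N_i$. Substituting this together with \eqref{EEVEtrace} and~\eqref{NandH} into the Gauss equation delivers~\eqref{scalar2}.

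Nonnegativity of $R_{\sigma_i}$ and its strict positivity when $H_i\neq 0$ follow at once from~\eqref{scalar2}, and the same identity shows that, with $H_i$ and $N_i$ already constant, $R_{\sigma_i}$ is constant if and only if $\nu(\Psi)_i^2$ is, which by continuity on the connected $\Sigma_i^{n-1}$ is equivalent to constancy of $\nu(\Psi)_i$ itself. I expect the only slightly delicate step to be the dimension-dependent cancellation in the Ricci computation, which relies on the precise value of $\widehat C$ and on careful tracking of the Hessian decomposition into tangential and normal pieces on $\Sigma_i^{n-1}$; everything else is standard hypersurface geometry.
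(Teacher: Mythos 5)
Your proof is correct, but it reaches the two key identities by a genuinely different route than the paper. You derive Equation~\eqref{NandH} purely kinematically, from umbilicity of $P^n$ in the static spacetime: comparing the $\partial_t\partial_t$-block of $\mathfrak h=\lambda p$ (computed from $\nabla_{\partial_t}\partial_t=N\operatorname{grad}N$) with its tangential block gives $\nu(N)_i=\lambda N_i$ and $H_i=(n-1)\lambda$ simultaneously. This uses no field equations at all, and as a bonus it pins down $\nu(N)_i=N_iH_i/(n-1)$ --- hence its constancy --- even in the degenerate case $H_i=0$, where \eqref{NandH} read as an identity is vacuous (the paper's ``constancy of $\nu(N)_i$ follows from \eqref{NandH}'' is slightly looser on exactly this point). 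You then obtain \eqref{scalar2} by substituting $\operatorname{Ric}(\nu,\nu)=-H_i\nu(N)_i/N_i$, $R=2\nu(\Psi)_i^2/N_i^2$, and \eqref{NandH} into the spatial Gauss equation for $\Sigma_i^{n-1}\hookrightarrow M^n$. The paper proceeds in the opposite order: it first derives \eqref{scalar2} from the traced Gauss equation for $P^n\hookrightarrow\mathbb R\times M^n$, using the warped-product formulas for $\mathfrak{Ric}(\eta,\eta)$ and $\mathfrak{Ric}(\nu,\nu)$ together with \eqref{EEVE3}, and only afterwards extracts \eqref{NandH} by subtracting the spatial Gauss identity (your intermediate equation, the paper's \eqref{scalar1}) from \eqref{scalar2}. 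The computation of $\operatorname{Ric}(\nu,\nu)$ via the Hessian decomposition \eqref{decompolaplacian}, Equation~\eqref{EEVE1}, and the cancellation forced by $\widehat C^2=2(n-2)/(n-1)$ is identical in both arguments. Your version has the advantage of staying entirely in the spatial slice for \eqref{scalar2} and of exhibiting \eqref{NandH} as a consequence of staticity and umbilicity alone; the paper's version makes explicit the spacetime Gauss equation and the second use of the full Ricci equation \eqref{EEVE3}, which is where the specific Reissner--Nordstr\"om structure of \eqref{scalar2} becomes visible.
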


\begin{proof}

First we show Formula~\eqref{scalar2}. 
We write $\mathfrak{Ric}$ and $\mathfrak{R}$ for the Ricci tensor and the scalar curvature of $(\mathbb R\times M^n, -N^2dt^2+g)$ and choose $\eta\defeq\frac 1 N \partial_t$ as a unit normal to 
$\bigcup\limits_{i=1}^l\Sigma_i^{n-1}$ in $P^n$. One calculates (applying a general formula for the curvature of warped products to $\left(\mathbb R\times M^n, -N^2dt^2+g\right)$, see e.g.~\cite{ONeill1}, and using Equation~\eqref{EEVE1}), that
\begin{equation*}
 \mathfrak{Ric}(\eta, \eta)=\frac{\Delta N}{N}=\frac{\widehat C^2}{N^2}|d\Psi|^2. 
\end{equation*}

The traced Gauss equation and the fact that $M^n$ is totally geodesic in $ \mathbb R\times M^n$ give 
\[
 \mathfrak R= R-2 \mathfrak{Ric} (\eta, \eta), 
\]
so that we arrive at 
\begin{align}
 \mathfrak R &= \frac{2}{N^2}|d\Psi|^2\left(1-\widehat C^2\right) \\
 &= \frac{2 \nu(\Psi)^2}{N^2}\left(1-\widehat C^2\right),\label{temp1.1}
\end{align}
where we also used Equation~\eqref{EEVEtrace} and the fact that  $ \Psi$ is constant on $P^n$ by definition of photon spheres. 

The traced Gauss equation applied to $P^n\hookrightarrow \mathbb R\times M^n$ simplifies by umbilicity to 

\begin{equation}\label{temp1.2}
 \mathfrak R-2 \mathfrak{Ric}(\nu, \nu) = R_P-\frac{n-1}{n}\mathfrak H^2,
\end{equation}

where $R_P$ denotes the scalar curvature of $P^n$, and we abused notation by denoting the unit normal to $P^n$ in $\mathbb R\times M^n$ by $\nu$ (like the unit normal to  $\bigcup\limits_{i=1}^l\Sigma_i^{n-1}$ in $M^n$). 

Again by a standard warped product formula,
\[
 \mathfrak{Ric}(\nu, \nu)=\operatorname{Ric}(\nu, \nu)-\frac{1}{N}\nabla^2 N (\nu, \nu), 
\]
so that by Equation~\eqref{EEVE3} and the fact that $ \Psi$ is constant on $P^n$, 

\begin{equation}\label{temp1.3}
 \mathfrak{Ric}(\nu, \nu)=-\widehat C^2 \frac{\nu (\Psi)^2}{N^2}. 
\end{equation}

Combining Equations~\eqref{temp1.1},~\eqref{temp1.2}, and~\eqref{temp1.3} allows to express $R_P$ as 

\begin{align*}
 R_P&= \frac{n-1}{n}\mathfrak H^2 +\frac{2\nu(\Psi)^2}{N^2}\left(1-\widehat C^2+\widehat C^2\right)\\
 &=\frac{n}{n-1} H^2 +\frac{2\nu(\Psi)^2}{N^2},
\end{align*}
and we have shown Equation\eqref{scalar2}.

We now prove Formula~\eqref{NandH}. 

By the traced Gauss equation and by umbilicity of $\Sigma_i^{n-1}$ in $M^n$, 
\begin{equation}\label{gaussforscalar}
\operatorname{R}-2\operatorname{Ric}(\nu, \nu)=R_{\sigma_i}-\frac{n-2}{n-1}H_i^2. 
\end{equation}
Plugging $\nu$ into both slots of Equation~\eqref{EEVE3} and the fact that $\Psi$ is constant on $\Sigma_i^{n-1}$ give
\begin{equation}
\operatorname{Ric}(\nu, \nu)=\frac{\nabla^2 N}{N}(\nu, \nu)-2\frac{n-2}{n-1}\frac{\nu(\Psi)^2}{N^2}.
\end{equation}

Recall that in general for a smooth isometric embedding of manifolds $(M_1^{n-1}, g_1)\hookrightarrow (M_2^n, g_2)$ with a spacelike unit normal $\nu$ and a smooth function $f: M_2^n\rightarrow \mathbb R$,
the formula 
\begin{equation}\label{decompolaplacian}
{}^{g_2}\Delta f={}^{g_1}\Delta f +{}^{g_2}\nabla^2(\nu, \nu) +H_{M_1}\nu(f)
\end{equation}
holds, where $H_{M_1}$ denotes the mean curvature of $M_1$ in $M_2$. 

Using this and Equation~\eqref{EEVE1}, we get 
\begin{align*}
\nabla^2 N (\nu, \nu)&=\Delta N-\Delta_{\sigma_i} N-H_i \nu(N)\\
&=\Delta N-H_i \nu(N)\\
&=2\frac{n-2}{n-1}\frac{\nu(\Psi)^2}{N}-H_i \nu(N). 
\end{align*}
This gives 

\begin{equation}
\operatorname{Ric}(\nu, \nu)=-\frac{H_i\nu(N)}{N}.
\end{equation}
Likewise, Equation~\eqref{EEVEtrace} reads in our case
\begin{equation}
R=2\frac{\nu(\Psi)^2}{N^2}.
\end{equation}
Plugging these expressions for $\operatorname{Ric}(\nu, \nu)$ and $\operatorname{R}$ into Equation~\eqref{gaussforscalar}, we get
\begin{equation}\label{scalar1}
2\frac{\nu(\Psi)_i^2}{N_i^2}+2\frac{H_i\nu(N)_i}{N_i}=R_{\sigma_i}-\frac{n-2}{n-1}H_i^2.
\end{equation}

Equation~\eqref{NandH} now follows immediately from Equations~\eqref{scalar2} and~\eqref{scalar1}. 

We note that
\[
 nH_i=(n-1)\mathfrak H_i,
\]
and since $\mathfrak H_i$ is constant (by Proposition~\ref{cmc}), so is $H_i$. 
The assertions about constancy of $\nu(N)_i$, $\nu(\Psi)_i$, and $ R_{\sigma_i}$ follow directly from Equations~\eqref{scalar2} and~\eqref{NandH}.

\end{proof}

Moreover, we define a notion of subextremality for photon spheres and quasilocal photon spheres, in agreement with the definition in~\cite{cedergalelectro}: 

\begin{defi}\label{subextremal}
A (quasilocal) photon sphere component $\Sigma_i^{n-1}$ in a (pseudo-)electrostatic system is called \emph{subextremal} if 
\[
 \frac{H_i^2}{R_{\sigma_i}}>\frac{n-2}{n-1}.
\]
If  ``$<$'' (``$=$'') holds, it is called \emph{superextremal (extremal)}. 
\end{defi}

As usual, a hypersurface $\Sigma^{n-1}$ in $M^n$ is called a \emph{static horizon} if it is the zero level set of the static lapse $N$, and it is called \emph{nondegenerate} if the outer normal derivative of the lapse is positive. 

We recall some well-known facts about static horizons in the electrostatic setting that carry over to the pseudo-electrostatic case: 

\begin{lem}\label{horizon} Let $\left(M^n, g, N, \Psi\right)$ be a (pseudo-)electrostatic system and  $\Sigma^{n-1}\subseteq M^n$ a static horizon. Then 
\begin{enumerate}
\item $\Sigma^{n-1}$ has vanishing mean curvature,
\item $\Psi\restriction_{\Sigma^{n-1}}= const.$, 
\item $d\Psi\restriction_{\Sigma^{n-1}}=0$. 
\end{enumerate}
\end{lem}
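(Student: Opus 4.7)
The plan is to establish the three statements in the order $(3) \Rightarrow (2) \Rightarrow (1)$. Multiplying \eqref{EEVE1} by $N$ produces the pointwise identity $N \Delta N = \widehat C^2 |d\Psi|^2$ between smooth functions on $M^n$. Since $N \equiv 0$ on the zero level set $\Sigma^{n-1}$, the left-hand side vanishes there, hence $|d\Psi|^2 \equiv 0$ on $\Sigma^{n-1}$; this is exactly (3). The pullback of $d\Psi$ to $\Sigma^{n-1}$ then vanishes in particular, so $\Psi$ is locally constant along $\Sigma^{n-1}$ and therefore constant on each connected component, which is (2).

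For (1), I would work in Fermi coordinates $(s, y^a)$ around $\Sigma^{n-1}$, with $s$ the signed distance and $g = ds^2 + h_s$. By (2) and (3), the Taylor expansions read $N = \nu(N)|_{\Sigma}\, s + O(s^2)$ and $\Psi = \Psi|_{\Sigma} + O(s^2)$, so in particular $|d\Psi|^2 = O(s^2)$. Plugging back into \eqref{EEVE1} shows $\Delta N = O(s)$, and in particular $\Delta N|_{\Sigma} = 0$. Combining this with the Laplacian decomposition \eqref{decompolaplacian} applied to $N$ along $\Sigma^{n-1} \hookrightarrow M^n$, and using that the tangential Laplacian of $N|_{\Sigma} \equiv 0$ vanishes, one obtains the key identity
\[
\nabla^2 N(\nu, \nu)|_{\Sigma} + H\, \nu(N)|_{\Sigma} = 0.
\]

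In the full electrostatic case, (1) follows by evaluating \eqref{EEVE3} on tangent vectors $X, Y \in T\Sigma^{n-1}$: the apparently singular terms $d\Psi \otimes d\Psi / N$ and $|d\Psi|^2 g / N$ are both $O(s) \to 0$ by the above expansions, so the equation reduces to $\nabla^2 N(X, Y)|_{\Sigma} = 0$. A short direct calculation using $N|_{\Sigma} = 0$ (which forces $\nabla N|_{\Sigma}$ to be normal to $\Sigma^{n-1}$) shows that $\nabla^2 N(X, Y)|_{\Sigma}$ is proportional to $\nu(N)\cdot II(X, Y)$ with nonzero constant, where $II$ is the second fundamental form of $\Sigma^{n-1}$ in $M^n$; hence non-degeneracy $\nu(N)>0$ forces $II \equiv 0$, so $\Sigma^{n-1}$ is totally geodesic, and in particular $H = 0$. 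In the purely pseudo-electrostatic case, where \eqref{EEVE3} is unavailable, I would instead complement the Laplacian identity above with a Fermi-coordinate expansion of \eqref{EEVE2}: writing out $\operatorname{div}(\nabla \Psi / N) = 0$ to leading order at $\Sigma^{n-1}$ and substituting $\nabla^2 N(\nu, \nu)|_{\Sigma} = -H\,\nu(N)|_{\Sigma}$ produces an additional relation between $H$ and the lowest non-trivial normal Taylor coefficients of $\Psi$; iterating the $|d\Psi|^2 = O(s^2)$ observation via \eqref{EEVE1} to higher orders then pins down $H = 0$.

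The main obstacle is precisely the pseudo-electrostatic step in (1): without the full Ricci equation \eqref{EEVE3} one loses the very clean ``$\Sigma^{n-1}$ totally geodesic'' conclusion, and one has to extract merely $H = 0$ from the comparatively weaker system \eqref{EEVE1}, \eqref{EEVE2}, \eqref{EEVEineq} through careful Taylor analysis near $\Sigma^{n-1}$.
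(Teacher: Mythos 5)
Your proofs of (2) and (3) are correct and are in fact more self-contained than what the paper offers (the paper only cites the derivation of Equations (11) and (13) in~\cite{bubble}): the identity $N\Delta N=\widehat C^{2}|d\Psi|^{2}$ extends by continuity to the zero set of $N$ and immediately kills the \emph{full} differential $d\Psi$ there, which is the strong form of (3) that the paper later needs for the $C^{1,1}$ reflection of $\Psi$. Your treatment of (1) in the genuinely electrostatic case is also fine (and gives the stronger conclusion that $\Sigma^{n-1}$ is totally geodesic), with the caveat that every step of it --- the expansion $N=\nu(N)s+O(s^{2})$ with nonvanishing leading coefficient, the conclusion $\Delta N\restriction_{\Sigma}=0$, the cancellation of the $d\Psi\otimes d\Psi/N$ terms in \eqref{EEVE3}, and the final implication $II\,\nu(N)=0\Rightarrow II=0$ --- silently uses nondegeneracy $\nu(N)>0$, which is not among the hypotheses of the lemma as stated (though it is assumed wherever the lemma is applied).

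The genuine gap is the pseudo-electrostatic case of (1), and it cannot be closed by the Taylor analysis you propose. Equations \eqref{EEVE1}, \eqref{EEVE2} together with the inequality $N^{2}R\geq 2|d\Psi|^{2}$ simply do not determine the mean curvature of the horizon: take $M^{n}=\mathbb R^{n}\setminus B_{1}(0)$ with the flat metric, $N=1-|x|^{2-n}$ and $\Psi\equiv 0$. Then \eqref{EEVE1} and \eqref{EEVE2} hold, $N^{2}R=0\geq 0=2|d\Psi|^{2}$, the unit sphere is a nondegenerate static horizon, and yet $H=n-1\neq 0$; all of your identities (e.g.\ $\nabla^{2}N(\nu,\nu)=-H\,\nu(N)$, which here reads $-(n-2)(n-1)=-(n-1)(n-2)$) are satisfied without forcing $H=0$. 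So no expansion of \eqref{EEVE1} and \eqref{EEVE2} to any order can pin down $H$; some tangential information on $\nabla^{2}N$ --- essentially the tangential part of \eqref{EEVE3}, or a substitute for it such as smooth extendability of the spacetime across the Killing horizon, which is what the near-horizon analysis cited by the paper implicitly supplies --- is indispensable. You correctly flag this step as the main obstacle, but the proposed fix is not viable, and the honest conclusion is that statement (1) does not follow from the pseudo-electrostatic equations alone as literally defined in the paper.
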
 
For a proof, we refer the reader to the derivation of Equations~(11) and~(13) in~\cite{bubble}, where these statements were deduced in an electrostatic context, but without appealing to the electrostatic equation for the Ricci tensor~\eqref{EEVE3}. 

\section{Zero mass and one-point insertion}\label{zeromassetc}

In this section, we prove two propositions about the asymptotic behavior of Reissner--Nordstr\"om manifolds after a specific conformal change which will be used in the proof of the main results Theorem~\ref{main} and Theorem~\ref{meta}. 

\begin{prop}[Zero mass of an asymptotic end after a conformal change]\label{zero}

\hfill\break
 Let $(M^n, g, N, \Psi)$ an asymptotically Reissner--Nordstr\"om system of  mass $m$ and  charge $q$. 

 Assume that $\Omega_{+} \defeq \left( \frac {(1+ N)^2-\widehat C^2\Psi^2} 4 \right)^{1/(n-2)}>0$ on all of $M^n$. 
 
Then the metric $ \Omega_+ ^2 g$ is asymptotically Reissner--Nordstr\"om with mass~$0$ and charge~$0$. 
\end{prop}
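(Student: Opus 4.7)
The plan is to reduce the claim to the explicit identity recorded in Remark~\ref{RNiso}, which says that in exact Reissner--Nordstr\"om $g_{m,q}=\varphi_{m,q}^{2/(n-2)}\delta$ with $\varphi_{m,q}=\left(\frac{(N_{m,q}+1)^2-\widehat C^2\Psi_q^2}{4}\right)^{-1}$. Setting $\Omega_{+,m,q}\defeq\left(\frac{(1+N_{m,q})^2-\widehat C^2\Psi_q^2}{4}\right)^{1/(n-2)}=\varphi_{m,q}^{-1/(n-2)}$, one immediately gets $\Omega_{+,m,q}^2\,g_{m,q}=\delta$. Thus the proposition reduces to showing that $\Omega_+^2\,g$ is asymptotic, in the $C^2_{-(n-1)}$ sense, to the Euclidean metric $\delta$, which (since $s_{0,0}=0$) plays the role of $g_{0,0}$ on any end $\mathbb R^n\setminus \overline{B_S^n(0)}$ with $S>s_{m,q}\ge 0$.

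For the perturbative part, I would work in the fixed asymptotic chart $\Phi$ and write $N=N_{m,q}+\widetilde N$, $\Psi=\Psi_q+\widetilde\Psi$, $\Phi_\ast g=g_{m,q}+h$, with each of $\widetilde N,\widetilde\Psi,h$ lying in $C^2_{-(n-1)}$ by Definitions~\ref{asyiso} and~\ref{asyrnfct}. The conformal factor is then $\Omega_+^2=F(N,\Psi)$ with the smooth function
\begin{equation*}
F(x,y)\defeq\left(\frac{(1+x)^2-\widehat C^2 y^2}{4}\right)^{2/(n-2)},
\end{equation*}
which is smooth near the asymptotic limit $(x,y)=(1,0)$; in particular $F$, together with its first and second derivatives in $(x,y)$, is uniformly bounded on a neighbourhood of $(N_{m,q},\Psi_q)$ for large $s$. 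A Taylor expansion around $(N_{m,q},\Psi_q)$ then yields
\begin{equation*}
\Omega_+^2-\Omega_{+,m,q}^2 = F_x(N_{m,q},\Psi_q)\,\widetilde N + F_y(N_{m,q},\Psi_q)\,\widetilde\Psi + R(\widetilde N,\widetilde\Psi),
\end{equation*}
with a remainder $R$ that is at least quadratic in its arguments, so that $\Omega_+^2-\Omega_{+,m,q}^2\in C^2_{-(n-1)}$ after applying the chain rule to first and second derivatives.

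Multiplying out gives
\begin{equation*}
\Omega_+^2\,g-\delta=\left(\Omega_+^2-\Omega_{+,m,q}^2\right)g_{m,q}+\Omega_{+,m,q}^2\,h+\left(\Omega_+^2-\Omega_{+,m,q}^2\right)h,
\end{equation*}
where each summand is a product of a $C^2$-bounded factor (either $g_{m,q}$, whose components are bounded on the end, or $\Omega_{+,m,q}^2\to 1$, or the $C^2_{-(n-1)}$ quantity $\Omega_+^2-\Omega_{+,m,q}^2$) with a $C^2_{-(n-1)}$ factor. Since the space $C^2_{-(n-1)}$ is closed under multiplication by $C^2$-bounded functions, the right-hand side is in $C^2_{-(n-1)}$, verifying the asymptotic decay requirement in Definition~\ref{asyiso}. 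Uniform positive definiteness of $\Omega_+^2\,g$ on the end is automatic from $\Omega_+\to 1$ and $g\to\delta$ there, while the hypothesis $\Omega_+>0$ on $M^n$ ensures the conformal change is well-defined globally.

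The proof is essentially a bookkeeping exercise; the only point requiring attention is tracking the chain rule for two derivatives of $\Omega_+^2=F(N,\Psi)$, verifying that each derivative of $F$ evaluated at $(N_{m,q}+\widetilde N,\Psi_q+\widetilde\Psi)$ stays bounded while at least one factor carrying decay is present. Nothing genuinely subtle occurs because $F$ and its derivatives are smooth and bounded near $(1,0)$, so the asymptotic smallness of $\widetilde N,\widetilde\Psi,h$ and their derivatives propagates cleanly through the algebraic manipulation.
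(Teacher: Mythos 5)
Your proposal is correct and follows essentially the same route as the paper's proof: both rest on the isotropic-coordinate identity $\varphi_{m,q}^{-2/(n-2)}g_{m,q}=\delta$ from Remark~\ref{RNiso}, estimate $\Omega_+^2-\varphi_{m,q}^{-2/(n-2)}$ in $C^2_{-(n-1)}$ using the asymptotics of $N$ and $\Psi$, and then split $\Omega_+^2 g-\delta$ into products of a $C^2$-bounded factor with a $C^2_{-(n-1)}$-decaying factor. Your Taylor-expansion packaging of the conformal factor is only a cosmetic reorganization of the paper's triangle-inequality decomposition.
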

\begin{proof}
We write $\Phi: E\rightarrow \mathbb R^n$ for the diffeomorphism that makes $M^n$ asymptotically Reissner--Nordstr\"om as in Definition~\ref{asyiso} and recall that we required $S>s_{m,q}$ (see Definition~\ref{asyiso}).

Since  $N_{m,q}$ and $\Psi_q$ are given explicitly, we may check that 
\begin{align}\label{phiisgood}
\left\|\left( \varphi_{m,q}\right)^{-\tfrac{2}{n-2}}- \left(\tfrac{(N_{m,q} +1)^2-\widehat C^2 \Psi_q^2}{4}\right)^{\tfrac{2}{n-2}}\right\|_{C^2_{0}(\mathbb R^n\setminus \overline{B^n_S(0)})}<C_1
\end{align}

(see Equation \eqref{varphi}) on $\mathbb R^n\setminus \overline{B^n_S(0)}$ for some $C_1=C_1(n, S)$.

The asymptotic behavior  
\[\| \Phi_\ast N- N_{m,q}\|_{C^2_{-(n-1)}(\mathbb R^n\setminus \overline{B^n_S(0)})} \leq C_2\] and \[\| \Phi_\ast \Psi- \Psi_q\|_{C^2_{-(n-1)}(\mathbb R^n\setminus \overline{B^n_S(0)})} \leq C_2\]
 gives that 

\begin{equation}\label{omegaphi}
  \|(\Phi_\ast \Omega_+^2) - \varphi_{m,q}^{-\frac{2}{n-2}} \|_{C^2_{-(n-1)}(\mathbb R^n\setminus \overline{B^n_S(0)})}<C_3
\end{equation}
for some $C_3=C_3(C_2,n, S)$.

From these facts combined we conclude that 

\[
 \left\|\Phi_\ast\Omega_+^2\right\|_{C^2_{0}(\mathbb R^n\setminus \overline{B^n_S(0)})}<C_4
\]

for some $C_4=C_4(C_1, C_3, n, S)$. 

Using the assumption that $\|(\Phi_\ast g)_{ij}-(g_{m,q})_{ij}\|_{C^2_{-(n-1)}(\mathbb R^n\setminus \overline{B^n_S(0)})} \leq C_5 $
for some $C_5$ and all $ i,j=1, \dots , n$, we now get 
\begin{equation*}
\|(\Phi_\ast \Omega_+^2) (\Phi_\ast g)_{ij}-(\Phi_\ast \Omega_+^2) (g_{m,q})_{ij}\|_{C^2_{-(n-1)}(\mathbb R^n\setminus \overline{B^n_S(0)})}<C_6
\end{equation*}
for some $C_6=C_6(C_4, C_5, n, S)$ and all $ i,j = 1, \dots , n$. 

On the other hand, the inequalities \eqref{phiisgood} and \eqref{omegaphi} also imply that there is a $C_7=C_7(C_1, C_3, n, S)$ such that
\begin{align*}
&\|(\Phi_\ast \Omega_+^2) (g_{m,q})_{ij}-  \varphi_{m,q}^{-\frac{2}{n-2}} (g_{m,q})_{ij}\|_{C^2_{-(n-1)}(\mathbb R^n\setminus \overline{B^n_S(0)})}\\
& = 
\| \varphi_{m,q}^{-\frac{2}{n-2}}\left(\Phi_\ast \Omega_+^2\delta_{ij} - (g_{m,q})_{ij}\right)\|_{C^2_{-(n-1)}(\mathbb R^n\setminus \overline{B^n_S(0)})}<C_7
\end{align*}
for all $i,j=1, \dots , n$. 

We can now compute
\begin{align*}
 &\|(\Phi_\ast \Omega_+^2 g)_{ij}-(g_{m=0,q=0})_{ij}\|_{C^2_{-(n-1)}(\mathbb R^n\setminus \overline{B^n_S(0)})} \\
 =&\|(\Phi_\ast \Omega_+^2 g)_{ij}-\delta_{ij}\|_{C^2_{-(n-1)}(\mathbb R^n\setminus \overline{B^n_S(0)})} \\
  =&\|(\Phi_\ast \Omega_+^2) (\Phi_\ast g)_{ij}- \varphi_{m,q}^{-\frac{2}{n-2}} (g_{m,q})_{ij}\|_{C^2_{-(n-1)}(\mathbb R^n\setminus \overline{B^n_S(0)})} \\
\leq &  \|(\Phi_\ast \Omega_+^2) (\Phi_\ast g)_{ij}-(\Phi_\ast \Omega_+^2) (g_{m,q})_{ij}\|_{C^2_{-(n-1)}(\mathbb R^n\setminus \overline{B^n_S(0)})} \\
 & + \|(\Phi_\ast \Omega_+^2) (g_{m,q})_{ij}- \varphi_{m,q}^{-\frac{2}{n-2}} (g_{m,q})_{ij}\|_{C^2_{-(n-1)}(\mathbb R^n\setminus \overline{B^n_S(0)})}\\
  < &C_6+C_7,
\end{align*}
which proves that $(M^n, \Omega_+ ^2 g)$ is asymptotically Reissner-Nordstr\"om with  mass~$0$ and  charge~$0$.
\end{proof}

\begin{prop}[One-point insertion]\label{onepoint}
 Let $(M^n, g, N, \Psi)$ an asymptotically Reissner--Nordstr\"om system of  mass $m$ and  charge $q$. 

 Assume that $\Omega_{-} \defeq \left( \frac {(1-N)^2-\widehat C^2\Psi^2} 4 \right)^{1/(n-2)}>0$ on all of $M^n$. 
 
Then one can insert a point $p_\infty$ into $(M^n, \Omega_{-} ^2 g)$ to obtain a compact Riemannian manifold $(M^n_\infty \coloneqq M^n\cup \{p_\infty\}, g_\infty)$ which is $C^{1,1}$-regular at $p_\infty$ and with boundary $\partial M^n_\infty =\partial M^n$.

\end{prop}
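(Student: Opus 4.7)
I will realize $p_\infty$ as the image of infinity under a Kelvin-type inversion on the asymptotic end. Using the asymptotic chart $\Phi=(x^i):E^n\to\mathbb R^n\setminus\overline{B^n_S(0)}$ from Definition~\ref{asyiso}, introduce coordinates $y^i\defeq x^i/|x|^2$; these map $\Phi(E^n)$ diffeomorphically onto the punctured ball $B^n_{1/S}(0)\setminus\{0\}$, and I declare $p_\infty$ to be the image of $y=0$. Since $\partial M^n$ lies in the compact core $K$ and is disjoint from the end, the resulting glued space $M^n_\infty$ is automatically compact with $\partial M^n_\infty=\partial M^n$, and the task reduces to verifying that $\Omega_-^2 g$, written in the $y$-chart, extends with $C^{1,1}$-regularity across $y=0$.

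The first step is to work out the model Reissner--Nordstr\"om case explicitly. A short algebraic computation using the isotropic forms of $N_{m,q}$, $\Psi_q$ and $\varphi_{m,q}$ yields
\[
 \frac{(1-N_{m,q})^2 - \widehat C^2 \Psi_q^2}{4} = \frac{m^2 - q^2}{4\,s^{2(n-2)}\,\varphi_{m,q}},
\]
hence $\Omega_{-,m,q}^2\, g_{m,q} = (c_{m,q}/s^4)\,\delta$ with $c_{m,q}\defeq\left((m^2-q^2)/4\right)^{2/(n-2)}$. The constant $c_{m,q}$ is strictly positive because the condition $S>s_{m,q}$ in Definition~\ref{asyiso} presupposes $m^2>q^2$. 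Since Kelvin inversion pulls $\delta$ back to $|y|^{-4}\delta$, the model metric $(c_{m,q}/s^4)\,\delta$ transforms into the flat metric $c_{m,q}\,\delta_y$ on $B^n_{1/S}(0)$, smooth through $y=0$; this furnishes the leading behavior of $g_\infty$ at $p_\infty$.

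Next I would estimate the perturbation by splitting
\[
 \Omega_-^2 g - \Omega_{-,m,q}^2 g_{m,q} = (\Omega_-^2 - \Omega_{-,m,q}^2)\,g_{m,q} + \Omega_-^2\,(g - g_{m,q}).
\]
Writing $N=N_{m,q}+h_N$ and $\Psi=\Psi_q+h_\Psi$ with $h_N,h_\Psi\in C^2_{-(n-1)}$ and expanding $(1-N)^2 - \widehat C^2\Psi^2$, the $O(s^{-(n-2)})$ leading behavior of $1-N_{m,q}$ and $\Psi_q$ combined with the decay of $h_N$ and $h_\Psi$ yields $\|\Omega_-^2-\Omega_{-,m,q}^2\|_{C^2_{-5}(\mathbb R^n\setminus\overline{B^n_S(0)})}<C'$. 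Combined with $\Omega_-^2=O(s^{-4})$ and $\|g-g_{m,q}\|_{C^2_{-(n-1)}}<C$, this controls both pieces of the perturbation in a weighted $C^2$ sense by a common bound of order $O(s^{-5})$.

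The main technical obstacle is converting these weighted decay rates into $C^{1,1}$-regularity of $\varphi^{*}(\Omega_-^2 g)$ at $y=0$. From the identity
\[
 (\varphi^{*}(\Omega_-^2 g))_{kl}(y) = \Omega_-^2(x(y))\,g_{ij}(x(y))\,J^i_k(y)\,J^j_l(y)\,|y|^{-4},
\]
with $J^i_k(y)=\delta^i_k-2y^iy^k/|y|^2$, together with the Jacobian estimates $|\partial_y x|=O(|y|^{-2})$ and $|\partial_y^2 x|=O(|y|^{-3})$, the chain rule converts $C^2$-weighted decay on the end into polynomial bounds on the $y$-ball. The hard part is checking term by term that the inversion-induced negative powers of $|y|$ are compensated by the decay of the perturbation, so that all second $y$-derivatives of $\varphi^{*}(\Omega_-^2 g)$ are uniformly bounded near $y=0$. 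This bookkeeping follows the template of~\cite{carlahigh} for the Schwarzschild case, with the extra $\widehat C^2\Psi^2$-contributions absorbed at the algebraic level when splitting off the model metric. Once this is carried out, one defines $g_\infty$ to be the resulting $C^{1,1}$-extension across $y=0$, completing the construction.
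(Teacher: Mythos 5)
Your construction follows the paper's proof step for step: the same algebraic identity $\tfrac{(1-N_{m,q})^2-\widehat C^2\Psi_q^2}{4}\,\varphi_{m,q}=\bigl(\tfrac{s_{m,q}}{s}\bigr)^{2(n-2)}$ showing that the conformally rescaled model metric is $c_{m,q}s^{-4}\delta$, the same inversion of the asymptotic end (you invert at the unit sphere and obtain $c_{m,q}\delta$ at $p_\infty$ while the paper inverts at radius $s_{m,q}$ and obtains $\delta$; this is cosmetic), and the same perturbative comparison with the model. The implicit use of $m^2>q^2$, needed so that the model is nondegenerate, is shared with the paper.

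However, the step you defer as ``bookkeeping'' is precisely where your own estimates fall short. You correctly derive $\Omega_-^2-\Omega_{-,m,q}^2=O(s^{-5})$: with $A\defeq\tfrac{(1-N)^2-\widehat C^2\Psi^2}{4}$ and $B$ its model value, one has $B\sim c\,s^{-2(n-2)}$ and $A-B=O(s^{-(2n-3)})$ (the dominant contribution being $s^{-(n-2)}(N-N_{m,q})$), whence $A^{2/(n-2)}-B^{2/(n-2)}\approx \tfrac{2}{n-2}B^{2/(n-2)}\cdot\tfrac{A-B}{B}=O(s^{-4})\cdot O(s^{-1})$. But an $O(s^{-5})$ deviation of $\Omega_-^2g$ from the $O(s^{-4})$ model becomes, after inversion, an $O(|y|)$ deviation from the flat metric near $y=0$; that is Lipschitz, and its second $y$-derivatives are generically of order $|y|^{-1}$, so it yields only $C^{0,1}$ rather than $C^{1,1}$ at $p_\infty$. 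What the paper asserts, and what the conclusion requires together with $n\geq 3$, is the stronger bound $\bigl\|\Phi_\ast(\Omega_-^2g)_{ij}-(s_{m,q}/s)^4\delta_{ij}\bigr\|_{C^2_{-(n+3)}}\leq C$, i.e.\ a \emph{relative} perturbation of order $s^{-(n-1)}$ rather than $s^{-1}$. Your second piece $\Omega_-^2(g-g_{m,q})=O(s^{-4})\cdot O(s^{-(n-1)})$ does meet this, but the piece $(\Omega_-^2-\Omega_{-,m,q}^2)g_{m,q}$ as you have bounded it does not: you would need $A-B=O(s^{-(3n-5)})$, not $O(s^{-(2n-3)})$. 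You must either upgrade this estimate (explaining where the extra decay of $N-N_{m,q}$ and $\Psi-\Psi_q$ beyond Definition~\ref{asyrnfct} comes from) or accept that the argument, as written, does not establish the claimed $C^{1,1}$-regularity.
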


\begin{proof}
Again, we write $\Phi: E\rightarrow \mathbb R^n$ for the diffeomorphism that makes $M^n$ asymptotically Reissner--Nordstr\"om as in Definition~\ref{asyiso} and recall that we required $S>s_{m,q}$. 

We note that 
\[
(1- N_{m, q})^2-\widehat{C^2}{\Psi}^2_{q}=\frac{m^2-q^2}{s^{2(n-2)}\left(1+\frac{m+q}{2s^{n-2}}\right)\left(1+\frac{m-q}{2s^{n-2}}\right)} 
\]

and hence 
\[
\frac{(1- N_{m, q})^2-\widehat{C^2}{\Psi}^2_{q}}{4} \cdot  \varphi_{m,q}=\frac{m^2-q^2}{4s^{2(n-2)}}=\left(\frac{s_{m,q}}{s}\right)^{2(n-2)}. 
\]

With similar arguments as in the proof of Proposition~\ref{zero} and following closely~\cite{carlahigh}, we can use this and the asymptotic behavior of $N_{m,q}$ and $\Psi_q$ to estimate 
\begin{align*}
 &\left\|\Phi_\ast (\Omega_{-} ^2 g)_{ij}-\left(\tfrac{s_{m,q}}{s}\right)^4\delta_{ij} \right\|_{C^2_{-(n+3)}(\mathbb R^n\setminus \overline{B^n_S(0)})}\\
 =&\left\|\Phi_\ast \left(\left( \tfrac {(1-N)^2-\widehat C^2\Psi^2} 4 \right)^{2/(n-2)} g_{ij}\right)-\left(\tfrac{(1-N_{m,q})^2-
 \widehat C^2\Psi_q^2}{4}\right)^{\tfrac{2}{n-2}}(g_{m,q})_{ij} \right\|_{C^2_{-(n+3)}(\mathbb R^n\setminus \overline{B^n_S(0)})}\\
 \leq & C_1
\end{align*}
for some $C_1=C_1(m, q, n, S)$ and all $i,j=1, \dots , n$, where we used additionally that 
\[
\left\|\left(\tfrac{(1-N)^2-\widehat C^2\Psi^2}{4}\right)^{\tfrac{2}{n-2}}\right\|_{C^2_{-4}(\mathbb R^n\setminus \overline{B^n_S(0)})}\leq C_0
\]
for some $C_0=C_0(m, q, n, S)$. 

Analogously to the proof of Proposition 2.8 in~\cite{carlahigh}, let now $(y^i)$ denote coordinates in $\R^n\setminus \overline{B^n_S(0)}$ so that $s=\vert y\vert_\delta$, 
and perform an inversion at the sphere of radius $s_{m,q}$ to new coordinates $\eta^{i}\coloneqq \left(\frac{s_{m,q}}{s}\right)^2 y^{i}$. Then
\begin{align*}
(\Phi_{\ast}\Omega_{-} ^2 g\,)(\partial_{\eta^{k}},\partial_{\eta^{l}})
&=(\Phi_\ast\Omega_{-} ^2 g\,)(\partial_{y^{i}},\partial_{y^{j}})\left(\frac{s}{s_{m,q}}\right)^{4}\left(\delta^{i}_{k}-2\frac{y^{i}y_{k}}{s^{2}}\right)\left(\delta^{j}_{l}-2\frac{y^{j}y_{l}}{s^{2}}\right),\\[1ex]
\delta(\partial_{\eta^{k}},\partial_{\eta^{l}})&
=\delta(\partial_{y^{i}},\partial_{y^{j}})\left(\frac{s}{s_{m,q}}\right)^{4}\left(\delta^{i}_{k}-2\frac{y^{i}y_{k}}{s^{2}}\right)\left(\delta^{j}_{l}-2\frac{y^{j}y_{l}}{s^{2}}\right)
=\left(\frac{s}{s_{m,q}}\right)^{4}\delta_{kl},
\end{align*}
where the indices are lowered and raised with the flat metric $\delta$. 

Together with the above estimate, it follows that
\begin{align*}
&\left\Vert 
(\Phi_{\ast}\Omega_-^2 g\,)(\partial_{\eta^{k}},\partial_{\eta^{l}})-\delta_{kl}\right\Vert_{C^{2}_{-(n-1)}(\R^n\setminus \overline{B^n_S(0)})}\\
= & \left\Vert
\left[\Phi_\ast (\Omega_{-} ^2 g)_{ij}-\left(\tfrac{s_{m,q}}{s}\right)^4\delta_{ij} \right]
 \left(\tfrac{s}{s_{m,q}}\right)^4 
\left(\delta^i_k-2\tfrac{y^iy_k}{s^2}\right)
\left(\delta^j_l-2\tfrac{y^j y_l}{s^2}\right) 
\right\Vert_{C^{2}_{-(n-1)}(\R^n\setminus \overline{B^n_S(0)})}\\
\leq & C_2
\end{align*}
for some $C_2=C_2(C_0, C_1, m, q, n, S)$, where the subscript ${C^{2}_{-(n-1)}(\R^n\setminus \overline{B^n_S(0)})}$ is to be interpreted in $(y^i)$-coordinates. 

In terms of the new coordinates $(\eta^i)$ and writing $S' \coloneqq \frac{s_{m,q}^2}{S}$, this (along with the assumption that $n\geq 3$) allows to conclude that 
\begin{align*}
&\left\Vert 
(\Phi_{\ast}\Omega_-^2 g\,)(\partial_{\eta^{k}},\partial_{\eta^{l}})-\delta_{kl}\right\Vert_{C^{2}_{2}(B^n_{S'}(0))}<C_3
\end{align*}
for some $C_3=C_3(C_0, C_1, C_2, m, q, n, S)$. 
We can thus insert a point $p_\infty$ (with $\eta^i(p_\infty)=0$ for all $i=1, \dots, n$) into $M^n$ and extend $g$ to a metric $g_\infty$ on $M^n_\infty=M^n\cup \{p_\infty\}$ by letting
\[
 g_\infty (x)\defeq
\left\{\begin{array}{lr} \Omega_-^2g(x) &\text{ for } x\neq p_\infty,\\
\delta &\text{for } x=p_\infty,
        \end{array}
\right. \]
and $g_\infty$ has $C^{1,1}$-regularity at $p_\infty$. 
\end{proof}

\section{Mass and charge}\label{masscharge}

We now prove two lemmata which together show $m>|q|$ under assumptions that are weaker than those of Theorem~\ref{meta}.  

\begin{lem}\label{aftermain1}
Let $\left(M^n, g, N, \Psi\right)$ be asymptotically Reissner--Nordstr\"om and let Equation~\eqref{EEVE1} be fulfilled. 
If $N$ is constant on each component of $\partial M^n$ and $\nu(N)>0$ on the inner boundary $\partial M^n$, then $m>0$. 
\end{lem}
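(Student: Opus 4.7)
The plan is to exploit that Equation~\eqref{EEVE1} makes $N$ subharmonic and that the mass $m$ can be read off the asymptotic flux of $\nabla N$ through large coordinate spheres. I would apply the divergence theorem to $N$ on a compact exhaustion of $M^n$ and compare the contribution from the boundary $\partial M^n$ with the one from infinity.

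More concretely, let $\Omega_R \defeq \{x\in M^n : s(x)\leq R\}$ for $R$ large enough that $\partial M^n \subset \Omega_R$, and let $\nu_R$ be the unit outward normal to $S_R\defeq \{s=R\}$. Since Equation~\eqref{EEVE1} reads $\Delta N = \widehat C^2 |d\Psi|^2/N \geq 0$ (using that $N$ is positive near $\partial M^n$ by the assumption $\nu(N)>0$ and constancy of $N$ on the components, and $N\to 1$ at infinity), one gets by the divergence theorem
\begin{equation*}
\int_{\Omega_R} \Delta N\, dV \;=\; \int_{S_R} \nu_R(N)\, dA \;-\; \int_{\partial M^n} \nu(N)\, dA,
\end{equation*}
where the minus sign reflects our convention that $\nu$ points towards the asymptotic end (i.e.\ opposite to the outward normal of $\Omega_R$ along $\partial M^n$). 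The left-hand side is nonnegative by Equation~\eqref{EEVE1}.

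Next, I would use the asymptotic Reissner--Nordstr\"om decay of Definition~\ref{asyrnfct} together with the explicit expansion $N_{m,q}(s)=1-\tfrac{m}{s^{n-2}}+O(s^{-2(n-2)})$ to compute
\begin{equation*}
\lim_{R\to\infty}\int_{S_R}\nu_R(N)\, dA \;=\; (n-2)\,\omega_{n-1}\, m,
\end{equation*}
where $\omega_{n-1}$ denotes the volume of $\mathbb S^{n-1}$. The integrability of $\Delta N$ over $M^n$ follows from the same asymptotic bounds, since $|d\Psi|^2 = O(s^{-2n})$. Combining these two facts yields
\begin{equation*}
(n-2)\,\omega_{n-1}\, m \;\geq\; \int_{\partial M^n}\nu(N)\, dA \;>\; 0,
\end{equation*}
where strict positivity uses the hypothesis $\nu(N)>0$ on the compact hypersurface $\partial M^n$. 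Hence $m>0$.

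The main technical obstacle is the careful asymptotic analysis of the flux at infinity: one must check that the error terms (coming both from $N-N_{m,q}$ being in $C^2_{-(n-1)}$ and from $g-g_{m,q}$ being in $C^2_{-(n-1)}$, including the effect on the metric gradient and on the area element of $S_R$) decay fast enough to contribute $o(1)$ as $R\to\infty$, so that only the $m/s^{n-2}$ term of $N_{m,q}$ survives in the limit. All these checks are direct consequences of the weighted $C^2_{-(n-1)}$ bounds in Definition~\ref{asyrnfct}, but they are the only nontrivial piece of the argument.
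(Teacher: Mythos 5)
Your argument is correct and is essentially the paper's own proof: both apply the divergence theorem to the subharmonic lapse ($\Delta N = \widehat C^{\,2}|d\Psi|^2/N \geq 0$ by Equation~\eqref{EEVE1}) on a compact exhaustion and compare the strictly positive flux through $\partial M^n$ with the asymptotic flux $(n-2)\operatorname{vol}\left(\mathbb S^{n-1}\right)m$. Two cosmetic remarks: the correct decay is $|d\Psi|^2=O(s^{-2(n-1)})$ rather than $O(s^{-2n})$ (which still gives integrability for $n\geq 3$), and the paper additionally handles boundary components on which $N$ vanishes by passing to a nearby level set of $N$ where $N>0$ and $\nu(N)>0$.
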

\begin{proof}
Let us first assume that $N>0$ on all of $\partial M^n$. 
By Stokes' theorem and Equation~\eqref{EEVE1}, 
\begin{align*}
 0&<\int\limits_{\partial M^n}^{}\nu(N)=-\int\limits_{M^n}^{} \Delta N+\int\limits_{\mathbb S_\infty^{n-1}}^{} \nu(N)\\
 &=-\int\limits_{M^n}^{} \frac{\widehat C^2}{N}|d\Psi|^2+\int\limits_{\mathbb S_\infty^{n-1}}^{} \nu(N)\\
 & \leq \int\limits_{\mathbb S_\infty^{n-1}}^{} \nu(N),
\end{align*}

where $\mathbb S_\infty^{n-1}$ is a sphere at infinity. By the asymptotic behavior of $N$,
\[
\int\limits_{\mathbb S_\infty^{n-1}}^{} \nu(N)=(n-2)\operatorname{vol}\left(\mathbb S_1^{n-1}\right)m
\]
so $m$ is positive. 

If $\partial M^n$ has components where $N$ vanishes, we pass from these components to a close-by level surface of $N$ where $N>0$ and $\nu(N)>0$. 
\end{proof}

For the remainder of this article, $\left(M^n, g, N, \Psi \right)$ will be as in the assumptios of Theorem~\ref{meta}. 
We will continue to denote those boundary components of $\left(M^n, g, N, \Psi \right)$ which are quasilocal photon spheres by $\Sigma_i^{n-1}$ ($1\leq i\leq l$) (in agreement with the notation fixed in Section~\ref{photonspheres}),
while the horizon components will be denoted by $\widehat\Sigma_i^{n-1}$ ($l+1\leq i\leq L$).

\begin{lem}\label{mandq}
For $(M^n, g, N, \Psi)$ consider the following conditions: 

\begin{enumerate}
\item each quasilocal photon sphere component is subextremal and each static horizon component is nondegenerate, 
\item $ F_\pm\defeq N-1\pm\widehat C  \Psi<0$ on $M^n$, 
\item $m^2>q^2$.
\end{enumerate}
\end{lem}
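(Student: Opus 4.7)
I read the lemma as asserting the chain of implications $(1) \Rightarrow (2) \Rightarrow (3)$ and would prove it in that order.

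For $(1) \Rightarrow (2)$, the main tool is the strong maximum principle applied to the functions $F_\pm \defeq N - 1 \pm \widehat{C}\Psi$. A direct computation combining Equations~\eqref{EEVE1} and~\eqref{EEVE2} shows that on the open set $\{N > 0\}$,
\[
\Delta F_\pm \mp \frac{\widehat{C}}{N}\langle d\Psi, dF_\pm\rangle = 0,
\]
a homogeneous linear elliptic equation without zeroth-order term. Since $F_\pm \to 0$ at infinity, the supremum of $F_\pm$ is attained either at infinity (value $0$) or on $\partial M^n$. The Hopf boundary point lemma would force $\nu(F_\pm) < 0$ at any interior boundary maximum; it therefore suffices to check $\nu(F_\pm) > 0$ on every component of $\partial M^n$. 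On a nondegenerate horizon component, Lemma~\ref{horizon} gives $d\Psi \equiv 0$, so $\nu(F_\pm) = \nu(N) > 0$. On a subextremal quasilocal photon sphere, Proposition~\ref{photonspheregeometry} yields $\nu(N)_i = \tfrac{H_i N_i}{n-1}$ and $\nu(\Psi)_i^2 = \tfrac{N_i^2}{2}\bigl(R_{\sigma_i} - \tfrac{n}{n-1}H_i^2\bigr)$. Plugging these into the identity $\widehat{C}^2 = 2(n-2)/(n-1)$ gives after a short algebraic manipulation
\[
\nu(N)_i^2 - \widehat{C}^2 \nu(\Psi)_i^2 = \frac{N_i^2}{(n-1)^2}\bigl((n-1)H_i^2 - (n-2)R_{\sigma_i}\bigr),
\]
which is strictly positive by the subextremality condition of Definition~\ref{subextremal}. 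Combined with $H_i > 0$ (Proposition~\ref{meanconvex}) and $N_i > 0$, this yields $\nu(F_\pm)_i = \tfrac{H_i N_i}{n-1} \pm \widehat{C}\nu(\Psi)_i > 0$, contradicting Hopf. Hence $F_\pm \leq 0$ on $M^n$, and strict negativity follows from the strong maximum principle since $F_\pm \not\equiv 0$.

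For $(2) \Rightarrow (3)$, I would compare with the reference Reissner--Nordstr\"om data. Using the isotropic expressions of Remark~\ref{RNiso}, a closed-form calculation gives
\[
F_\pm^{m,q} \defeq N_{m,q} - 1 \pm \widehat{C}\Psi_q = -\frac{m \mp q}{s^{n-2} + (m\mp q)/2},
\]
and the asymptotic hypothesis ensures $F_\pm = F_\pm^{m,q} + O(s^{-(n-1)})$. Since $F_\pm < 0$ for all large $s$, the leading-order term cannot have the wrong sign, which forces $m \geq |q|$.

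The main obstacle I foresee is upgrading $m \geq |q|$ to the strict inequality $m > |q|$, because in the extremal case $m = |q|$ one of $F_\pm^{m,q}$ vanishes identically and the sign of the corresponding $F_\pm$ near infinity is governed only by subleading corrections. I would try to close this gap either by a flux/integration argument in the spirit of Lemma~\ref{aftermain1} (integrating $\Delta F_\pm$ against a suitable weight and exploiting the strict boundary inequality $\nu(F_\pm) > 0$ established in step~1 to extract a definite sign for the asymptotic mass-type coefficient), or by a rigidity argument showing that $m = |q|$ together with $F_\pm < 0$ strictly would force an asymptotic collapse incompatible with the interior boundary geometry assumed in $(1)$.
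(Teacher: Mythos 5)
Your argument is essentially the paper's own: the same drift--Laplace equation $\Delta F_\pm \mp \widehat C\, d\Psi(\operatorname{grad}F_\pm)/N=0$, the same maximum-principle/Hopf scheme reducing $(1)\Rightarrow(2)$ to the strict positivity of $\nu(F_\pm)$ on each boundary component, the same reduction of that positivity to $\nu(N)_i^2-\widehat C^2\nu(\Psi)_i^2>0$ via Equations~\eqref{scalar2} and~\eqref{NandH}, and the same identification of this with the subextremality condition (your displayed formula is off by an overall factor of $n-1$, which is harmless for the sign). For $(2)\Rightarrow(3)$ you also do exactly what the paper does, namely read off the sign of the leading coefficient $-m\pm q$ of $F_\pm$ at infinity. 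Two remarks. First, on the horizon components: your shortcut via Lemma~\ref{horizon} correctly gives $\nu(F_\pm)=\nu(N)>0$ there, but note that the drift coefficient $\widehat C\,d\Psi/N$ is formally singular where $N=0$, so the bare Hopf lemma does not apply at such a boundary component without further control; this is precisely why the paper defers to the near-horizon asymptotic analysis of~\cite{bubble} at this point rather than arguing as you do. Second, the ``main obstacle'' you identify --- that the leading asymptotics of $F_\pm<0$ only yield $m\geq\pm q$, with the extremal case $m=\pm q$ (where $F_\pm^{m,q}\equiv 0$) left to subleading terms --- is real, but the paper does not resolve it either: its proof simply asserts ``$F_\pm$ is negative in the asymptotic region, and hence $m>\pm q$.'' So you have not missed a trick contained in the paper; if anything, your explicit flagging of the borderline case (and your suggestion to close it by a flux argument in the spirit of Lemma~\ref{aftermain1}) goes beyond what is written there.
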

Then $(1)\Rightarrow (2)\Rightarrow (3)$.
\begin{proof}
``$(1)\Rightarrow (2)$'':
Due to the electrostatic equations~\eqref{EEVE1} and~\eqref{EEVE2}, $F_\pm$ fulfills 
\begin{equation}\label{pde}
 \Delta F_\pm \mp\frac{\widehat C d\Psi (\operatorname{grad}F_\pm)}{N}=0,
\end{equation}
 
 which we will use to apply a maximum principle to $F_\pm$. 
On the other hand, the asymptotic behavior of $ N$ and $ \Psi$ gives $F_\pm\rightarrow 0$ as $r\to\infty$. 
If $F_\pm $ was positive at some point in $M^n$, then $F_\pm $ had a positive maximum on $M^n\cup \partial M^n$, hence (by the maximum principle) on $\partial M^n$, which means that at least on one boundary component $\Sigma_i^{n-1}$, the normal derivative $\nu(F_\pm)$ could not be positive.
Hence, $F_\pm$ is negative on $M^n$ provided that $\nu(F_\pm)$ is positive on  $\partial M^n$; it thus remains to be shown that  $\nu(F_\pm)\restriction_{\partial M^n}>0$:

We fix $1\leq i\leq l$. 
If $\Sigma_i^{n-1}$ is a quasilocal photon sphere component, then 
$\nu(F_\pm)$ is positive on $\Sigma_i^{n-1}$ if and only if 
\begin{align*}
 & \nu(N)^2-\widehat C^2\nu(\Psi)^2>0\\
 \xLeftrightarrow{\mathit{\eqref{scalar2}}, \eqref{NandH}}  &  \frac{H_i^2}{(n-1)^2}-\frac{n-2}{n-1}\left(  R_{\sigma_i}-\frac{n}{n-1}H_i^2\right)>0\\
\Leftrightarrow & \frac{H_i^2}{R_{\sigma_i}}>\frac{n-2}{n-1},
\end{align*}

and the last inequality is the subextremality condition. 

For static horizon components, $\nu(F_\pm)\restriction_{\Sigma_i^{n-1}}>0$ was shown in \cite{bubble} as a consequence of Equation~\eqref{EEVE1} and the nondegeneracy condition $\nu(N)>0$, by analyzing the near-horizon asymptotics.

We have shown that $F_\pm$ is negative on $M^n$ provided that every 
quasilocal photon sphere component is subextremal and every static horizon component is nondegenerate. 

``$(2)\Rightarrow (3)$'':

For the Reissner--Nordstr\"om lapse and potential, we note the asymptotic behavior 
\[
N_{m,q}-1\pm \widehat C\Psi_q=(-m\pm q)r^{-n+2}+\mathcal O (r^{-n+1})
\]

for $r\to\infty$. 

By the asymptotic conditions for $N$ and $\Psi$, we deduce that also 
\[
F_\pm =(-m\pm q)r^{-n+2}+\mathcal O (r^{-n+1}) 
\]

for $r\to\infty$.

By the assumption, $F_\pm$ is negative in the asymptotic region, and hence $m>\pm q$. 
\end{proof}

Note that the implication ``$(2)\Rightarrow (3)$'' was already shown in~\cite{cedergalelectro} for the case that the boundary is a photon sphere and $n=3$; our proof here is very similar.

\section{Main part of the proof of the main results}\label{mainproof}

In the remainder of this article, we will prove Theorem~\ref{meta}. Since we have shown in Section~\ref{photonspheres} that a photon sphere in an electrostatic system is a quasilocal photon sphere, this will imply Theorem~\ref{main}.  

The proof follows the steps of constructing suitable pseudo-electrostatic fill-ins and attaching them to the boundary, doubling this new manifold along its new boundary, conformally compactifying it and applying the positive mass theorem, and determining the conformal factor to show that the original manifold was a piece of a Reissner--Nordstr\"om manifold. 

\subsection{Gluing in pseudo-Reissner--Nordstr\"om necks}\label{glue}

In this section, we glue suitable pieces of pseudo-electrostatic spacetimes  to quasilocal photon sphere components of the inner boundary of the (pseudo-)electrostatic system $(M^n, g, N, \Psi)$, thereby getting a horizon as a new inner boundary. 
 
 We fix a quasilocal photon sphere $\Sigma_i^{n-1}$ and define its scalar curvature radius
 \[
  r_i\defeq \sqrt{\frac{(n-1)(n-2)}{R_{\sigma_i}}}, 
 \]
keeping in mind that the scalar curvature $R_{\sigma_i}$ is strictly positive.

Now we define a ``charge''
\begin{align}\label{qidef}
 q_i &\defeq \sqrt{\frac{2}{(n-1)(n-2)}}\frac{\nu(\Psi)_i}{N_i}r_i^{n-1} 
\end{align}
as well as a ``mass'' 
\begin{align}\label{defmi}
 m_i\defeq \frac{r_i^{n-2}}{n}+\frac{(n-1)q_i^2}{n r_i^{n-2}}.
\end{align}

 We need to show for later use that $m_i^2>q_i^2$. To this end, we calculate (plugging in Definition~\ref{defmi} for $m_i$)
 \begin{align*}
  r_i^{2(n-2)}n^2\left(m_i^2-q_i^2\right)= r_i^{4(n-2)}-(n^2-2n+2)q_i^2r_i^{2(n-2)}+(n-1)^2q_i^4 \end{align*}
and this quantity is positive provided that 
 
\[ r_i^{2(n-2)}>(n-1)^2q_i^2.
\]

Plugging in Definition~\ref{qidef} for $q_i$ and then using Equation~\eqref{scalar2} to substitute for $\frac{\nu(\Psi)^2}{N_i^2}$, this is seen to be equivalent to 

\[
 \frac{H_i^2}{R_{\sigma_i}}>\frac{n-2}{n-1},
\]

which is exactly the subextremality condition. This shows that $m_i^2>q_i^2$. 

We may now define 
\[ I_i \defeq \left[a_i, b_i\right]\defeq\left[\left(m_i+\sqrt{m_i^2-q_i^2}\right)^{\frac{1}{n-2}},r_i=\left(\frac{ m_i n}{2}+\frac 1 {2} \sqrt{m_i^2n^2-4(n-1)q_i^2}\right)^\frac{1}{n-2}\right]\]

and set 
 \[\gamma_i \defeq \frac{1}{N_{m_i, q_i}(r)^2}dr^2+\frac{r^2}{r_i^2}\sigma_i, \]

and recall  $N_{m_i, q_i}(r) = \sqrt{1-\frac{2m_i}{r^{n-2}}+\frac{q_i^2}{r^{2(n-2)}}}$ and  $\Psi_{q_i}(r)=\frac{q_i}{\widehat C r^{n-2}}$.

Because of its importance for the subsequent arguments, we state the following fact as a lemma: \newpage
\begin{lem}
Let $\alpha_i, \beta_i>0$ be constants. 
The system $(I_i\times \Sigma_i^{n-1}, \gamma_i, \alpha_i N_{m_i, q_i}, \alpha_i \Psi_{q_i}+\beta_i)$ is a traced-electrostatic system. Furthermore, $\mathbb R\times \{b_i\}\times \Sigma_i^{n-1}$ is a photon sphere in $\left(\mathbb R\times I_i\times\Sigma_i^{n-1}, - (\alpha_i N_{m_i, q_i})^2dt^2+\gamma_i\right)$ and $\{a_i\}\times \Sigma_i^{n-1}$ is a nondegenerate static horizon. 
\end{lem}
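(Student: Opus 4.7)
The metric $\gamma_i$ is a piece of Reissner--Nordstr\"om-type geometry on $I_i\times\Sigma_i^{n-1}$, with the usual spherical factor $r^2\Omega_{n-1}$ replaced by $(r/r_i)^2\sigma_i$. The crucial observation is that the choice $r_i\defeq\sqrt{(n-1)(n-2)/R_{\sigma_i}}$ is calibrated so that $(r/r_i)^2\sigma_i$ has scalar curvature $R_{\sigma_i}r_i^2/r^2=(n-1)(n-2)/r^2$---exactly the contribution of $r^2\Omega_{n-1}$ in the classical case. Accordingly, all radial computations should reduce to the classical Reissner--Nordstr\"om ones, and both $\alpha_i>0$ and $\beta_i$ should drop out of the electrostatic equations by homogeneity (for~\eqref{EEVE1} and~\eqref{EEVEtrace}) and because $d\beta_i=0$ (for~\eqref{EEVE2}).

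\emph{Step 1 (traced-electrostatic equations).} Writing $\gamma_i$ as a warped product $du^2+\rho(u)^2\sigma_i$ with $\rho(u)=r(u)/r_i$ in arclength $u$ (so that $du=dr/N_{m_i,q_i}$), the volume element gains the factor $(r/r_i)^{n-1}/N_{m_i,q_i}$, so for any radial $f$ one has $\Delta f=(N_{m_i,q_i}/r^{n-1})\,\partial_r\bigl(r^{n-1}N_{m_i,q_i}f'(r)\bigr)$ and analogous formulas for gradient and divergence. A direct substitution yields $\Delta N_{m_i,q_i}=(n-2)^2q_i^2N_{m_i,q_i}/r^{2(n-1)}$, which equals $\widehat C^2|d\Psi_{q_i}|^2/N_{m_i,q_i}$ (using $\widehat C^2=2(n-2)/(n-1)$), giving~\eqref{EEVE1}. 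For~\eqref{EEVE2} the radial flux $\sqrt{|\gamma_i|}\,N_{m_i,q_i}\Psi_{q_i}'(r)$ turns out to be $r$-independent. For~\eqref{EEVEtrace}, the standard warped-product scalar curvature formula yields $R_{\gamma_i}=R_{\sigma_i}r_i^2/r^2-(n-1)V'(r)/r-(n-1)(n-2)V(r)/r^2$ with $V=N_{m_i,q_i}^2$; after substituting $R_{\sigma_i}r_i^2=(n-1)(n-2)$ and expanding, this collapses to $R_{\gamma_i}=(n-1)(n-2)q_i^2/r^{2(n-1)}=2|d\Psi_{q_i}|^2/N_{m_i,q_i}^2$.

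\emph{Step 2 (photon sphere at $r=b_i$).} In the Lorentzian spacetime $-\alpha_i^2N_{m_i,q_i}^2\,dt^2+\gamma_i$, the slice $\mathbb R\times\{r=r_0\}\times\Sigma_i^{n-1}$ is a level set of the arclength coordinate in a metric of the form $-\alpha_i^2N(u)^2\,dt^2+du^2+\rho(u)^2\sigma_i$. A direct Christoffel computation shows such a slice is totally umbilical iff the logarithmic $u$-derivatives of $N_{m_i,q_i}(u)$ and $\rho(u)$ agree at $u_0$, which by the chain rule translates to $rN_{m_i,q_i}'(r)=N_{m_i,q_i}(r)$, equivalently $r^{n-2}$ is a root of the quadratic $X^2-nm_iX+(n-1)q_i^2=0$. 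From the defining formula $m_i=r_i^{n-2}/n+(n-1)q_i^2/(nr_i^{n-2})$ one reads off that $r_i^{n-2}$ is a root, and the subextremality inequality $r_i^{2(n-2)}>(n-1)q_i^2$ (established earlier in the proof of $m_i^2>q_i^2$) identifies it as the \emph{larger} root, matching the stated expression for $b_i$. Constancy of the lapse and potential on the slice is immediate, since they depend only on $r$.

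\emph{Step 3 (nondegenerate horizon at $r=a_i$) and main obstacle.} Since $a_i^{n-2}=m_i+\sqrt{m_i^2-q_i^2}$ is by construction the larger root of $V(r)=0$, the lapse vanishes on $\{a_i\}\times\Sigma_i^{n-1}$, so it is a static horizon. A direct computation using the outer unit normal $N_{m_i,q_i}\partial_r$ yields $\nu(\alpha_iN_{m_i,q_i})=\alpha_iV'(a_i)/2=\alpha_i(n-2)a_i^{-(n-1)}\sqrt{m_i^2-q_i^2}>0$ (where one uses $q_i^2a_i^{-(n-2)}=m_i-\sqrt{m_i^2-q_i^2}$), establishing nondegeneracy. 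The only real obstacle throughout the proof is bookkeeping the $r_i$-rescaling of the fiber; once one observes that this rescaling is precisely what makes $R_{\sigma_i}$ play the role of $(n-1)(n-2)$ in standard Reissner--Nordstr\"om, the whole lemma reduces to classical radial computations.
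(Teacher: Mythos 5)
Your proof is correct and follows essentially the same route as the paper, which simply asserts that the traced-electrostatic equations, the umbilicity of $\{b_i\}\times\Sigma_i^{n-1}$, and the nondegenerate-horizon property at $\{a_i\}\times\Sigma_i^{n-1}$ can be "verified by a straightforward computation" against the Reissner--Nordstr\"om model of mass $m_i$ and charge $q_i$, together with the invariance of Equations~\eqref{EEVE1}, \eqref{EEVE2}, \eqref{EEVEtrace} under $(N,\Psi)\mapsto(\alpha N,\alpha\Psi+\beta)$; you have merely carried out that computation explicitly (and correctly observed that only the \emph{scalar} curvature of $\sigma_i$ enters, which is why only the traced, not the full, electrostatic equations are claimed).
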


\begin{proof}
For the system system $(I_i\times \Sigma_i^{n-1}, \gamma_i, N_{m_i, q_i}, \Psi_{q_i})$, Equations~\eqref{EEVE1}, \eqref{EEVE2}, and~\eqref{EEVEtrace} can be verified by a straightforward computation involving the Christoffel symbols of the metric $\gamma_i$ and a comparison with those of the Reissner--Nordstr\"om manifold of mass $m_i$ and charge $q_i$. 
Since Equations~\eqref{EEVE1},~\eqref{EEVE2} , and~\eqref{EEVEtrace} are invariant under scaling $\left(N, \Psi\right)\mapsto\left(\alpha N, \alpha \Psi +\beta\right)$ of a lapse $N$ and a potential $\Psi$ by positive constants $\alpha, \beta$, the system
$(I_i\times \Sigma_i^{n-1}, \gamma_i, \alpha_i N_{m_i, q_i}, \alpha_i\Psi_{q_i}+\beta_i)$ is also a traced-electrostatic system. 

One verifies by direct computations that $\{b_i\}\times \Sigma_i^{n-1}$ is a photon sphere and $\{a_i\}\times \Sigma_i^{n-1}$ a nondegenerate static horizon. 
\end{proof}

We now choose 
\begin{align*}
 \alpha_i &\defeq \frac{N_i}{N_{m_i, q_i}(r_i)} >0,\\
 \beta_i &\defeq \Psi_i -\alpha_i \frac{q_i}{\widehat C r_i^{n-2}}.
\end{align*}

 Now we combine $(M^n, g, N, \Psi)$ with $(I_i\times \Sigma_i^{n-1}, \gamma_i, \alpha_i N_{m_i, q_i}, \alpha_i\Psi_{q_i}+\beta_i)$ to a new system $(\widetilde M^n, \widetilde g, \widetilde N,\widetilde \Psi)$ by gluing along the boundary components $\Sigma_i^{n-1}$ and setting 
 
 \[
 \begin{aligned}
  \widetilde g& \defeq \begin{cases} g \phantom{blablabla} & \text{ on } M^n,
                    \\
                   \gamma_i & \text{ on } \Sigma_i^{n-1},
                     \end{cases}\\
\widetilde N & \defeq \begin{cases} N & \text{ on } M^n,
                   \\
                   \alpha_i N_{m_i, q_i} \phantom{lalelu} & \text{ on } \Sigma_i^{n-1},
                     \end{cases}\\
\widetilde \Psi & \defeq \begin{cases} \Psi & \text{ on } M^n,
                   \\
                   \alpha_i \Psi_{q_i}+\beta_i \phantom{l} & \text{ on } \Sigma_i^{n-1}.
                     \end{cases}
 \end{aligned}
 \]

 \emph{We proceed to show that $\widetilde g, \widetilde N$, and $\widetilde \Psi$ are well-defined and $C^{1,1}$ across all gluing surfaces $\Sigma_i^{n-1}$.}
 
We intend to use $\widetilde N$ as a smooth collar function across the gluing surface $\Sigma_i^{n-1}$; to this end, we first collect some facts about $\widetilde N$ and $\widetilde \Psi$.

 By the choice of the scaling constants $\alpha_i$ and $\beta_i$, both $\widetilde N$ and $\widetilde \Psi$ are well-defined and continuous across $\Sigma_i^{n-1}$. 
 
 On the side of the glued-in necks $I_i\times \Sigma_i^{n-1}$, the unit normal to $\Sigma_i^{n-1}$ is given as $\widetilde\nu=N_{m_i, q_i}(r_i)\partial_r$. 
 
We use the explicit form of $\widetilde \Psi$ on the necks and the definition of $q_i$ to calculate that 
\[\widetilde\nu(\widetilde\Psi)=\alpha_i N_{m_i, q_i}(r_i)\partial_r (\Psi_{q_i})=-\alpha_i N_{m_i, q_i}(r_i)\frac{q_i}{\widehat C}(n-2)r_i^{-n+1}=\nu(\Psi)_i
\] 

on $I_i\times \Sigma_i^{n-1}$; therefore, the normal derivative of $\widetilde\Psi$ is the same on both sides.

Note that $R_{\sigma_i}$ agrees with the scalar curvature of the photon sphere in the Reissner--Nordstr\"om manifold of mass $m_i$ and charge $q_i$. This can be seen by solving the definition of $m_i$ 
for $r_i^{n-2}=\frac 1 2 m_i n+\frac 1 2 \sqrt{m_i^2n^2-4(n-1)q_i^2}$, where one  also needs to use subextremality $r_i^{2(n-2)}>(n-1)^2 q_i^2$ to rule out the possibility 
 $r_i^{n-2}=\frac 1 2 m_i n-\frac 1 2 \sqrt{m_i^2n^2-4(n-1)q_i^2}$. Plugging this into the definition of $r_i$ and solving for $R_{\sigma_i}$ then gives the term for $R_{\sigma_i}$ which is exactly the scalar curvature of the induced metric of the Reissner--Nordstr\"om photon sphere of mass $m_i$ and $q_i$. 

Since $R_{\sigma_i}$ agrees with the respective Reissner--Nordstr\"om term, one sees from by definition of $q_i$ that $\nu(\widetilde\Psi)$ (which we already showed to agree from both sides) also agrees with the respective value for Reissner--Nordstr\"om. 

Now, by Equation~\eqref{scalar2}, the square of the mean curvature of $\Sigma_i^{n-1}$ on the original side is the same as the square of the mean curvature of the Reissner--Nordstr\"om photon sphere with mass $m_i$ and $q_i$. But since the sign of the mean curvature on the original side is positive by Proposition~\ref{meanconvex}), the mean curvature agrees with the mean curvature of 
the photon sphere in the Reissner--Nordstr\"om manifold of mass $m_i$ and charge $q_i$ (both with respect to the outward pointing unit normal), which is positive. 

On the newly glued-in side, it can be verified by a direct comparison with the Reissner--Nordstr\"om manifold of mass $m_i$ and charge $q_i$ that the mean curvature of $\Sigma_i^{n-1}$ agrees with the one from Reissner--Nordstr\"om. Therefore, the mean curvature of $\Sigma_i^{n-1}$ agrees from both sides.

We proceed to show that the normal derivative of $\widetilde N$ agrees from both sides. 

On the original side $M^n$, it can be expressed via Equation~\eqref{NandH} as 
\[ 
\nu( N)_i =(n-1)H_i N_i.
\]

On the side of the glued-in necks, we compare $\widetilde \nu(\widetilde N)_i$ with the normal derivative of the lapse in the the Reissner--Nordstr\"om manifold of mass $m_i$ and charge $q_i$ at the photon sphere. 
Denoting the outward pointing unit normal to the photon sphere in this Reissner--Nordstr\"om manifold by $\nu_{m_i, q_i}$, we explicitly calculate for the neck that 
\[
\widetilde \nu(\widetilde N)=\alpha_i \nu_{m_i, q_i}\left(N_{m_i, q_i}\right). 
\]

Since for the photon sphere of the Reissner--Nordstr\"om manifold Equation~\eqref{scalar2} holds and we already know that $H_i$ agrees with the the mean curvature $H_{m_i, q_i}$ of 
the photon sphere in the Reissner--Nordstr\"om manifold of mass $m_i$ and charge $q_i$, 
we get on the glued-in side that
\[ \widetilde \nu(\widetilde N)=\alpha_i \nu_{m_i, q_i}\left(N_{m_i, q_i}\right)= \alpha_i (n-1)H_{m_i, q_i} N_{m_i, q_i}=(n-1)H_i\widetilde N_i\] 
agrees with $\widetilde \nu(\widetilde N)$ on the original side. 

Now, as $\widetilde N$ is well-defined, constant on $\Sigma_i^{n-1}$, and its normal derivatives do not vanish and agree from both sides, we can use $\widetilde N$ as a smooth collar function in a neighborhood of  $\Sigma_i^{n-1}$. 
This finally shows that $\widetilde M^n$ is a smooth manifold. 

Since we also showed along the way that $\widetilde \Psi$ is well-defined and its normal derivatives agree from both sides (and since it is smooth away from the gluing surfaces),
$\widetilde \Psi$ is indeed $C^{1,1}$ across $\Sigma_i^{n-1}$. 

Only the regularity of the metric $\widetilde g$ remains to be proven. To this effect, let $\{y^A\}$ be local coordinates on $\Sigma_i^{n-1}$ and flow them
to a neighborhood of $\Sigma_i^{n-1}$ in $\widetilde M^n$ along the level set flow of $\widetilde N$.  We will
show that the components $\widetilde g_{\widetilde N \widetilde N}$, $\widetilde g_{\widetilde N A}$, and $\widetilde g_{AB}$ are $C^{1,1}$ \newpage \noindent across $\Sigma_i^{n-1}$ with respect to the coordinates $\left(\widetilde N, y^A\right)$ 
for all $A,B=1, \dots, n-1$. This is done exactly as in~\cite{cedergalelectro}, so we will be brief:

Continuity and smoothness in tangential directions of $\widetilde g$ in the chosen coordinates are immediate by construction of $\widetilde g$. 
The components $\widetilde g_{\widetilde N A}$ vanish in a neighborhood of $\Sigma_i^{n-1}$ (for each $A=1, \dots, n-1$). 
Hence, we only need to consider the normal derivatives of $\widetilde g_{\widetilde N\widetilde N}$ and  $\widetilde g_{AB}$ for $A,B=1, \dots, n-1$. 

Now
\[
\widetilde g_{AB, \widetilde N}=\frac{2}{\widetilde\nu(\widetilde N)}\widetilde h_{AB}
\]
holds on $\Sigma_i^{n-1}$, where $\widetilde h_{AB}$ denotes the second fundamental form of $\Sigma_i^{n-1}$ in $(\widetilde M^n, \widetilde g)$. By umbilicity and the fact that the mean curvature agrees from both sides, 
$ \partial_{\widetilde N}\left(\widetilde g_{AB}\right)$ is the same from both sides. 

Also, 
\[
 \widetilde g_{ \widetilde N \widetilde N, \widetilde N}=-2\widetilde\nu(\widetilde N)^2\nabla^2 \widetilde N(\widetilde\nu,\widetilde \nu), 
\] 
and by Equation~\eqref{decompolaplacian}, constancy of $\widetilde N$ and $\widetilde\Psi$ on $\Sigma_i^{n-1}$, and Equation~\eqref{EEVE1}, one gets
\[
  \nabla^2 \widetilde N(\widetilde\nu,\widetilde \nu)= \frac{\widehat C^2}{\widetilde N}\widetilde\nu(\widetilde \Psi)^2-H_i\widetilde\nu(\widetilde N),
\]
so that $\widetilde g_{ \widetilde N \widetilde N, \widetilde N}$ agrees from both sides.

Summing up, the system $\left(\widetilde M^n, \widetilde g, \widetilde N, \widetilde \Psi\right)$ we constructed is $C^{1,1}$ on a finite set of hypersurfaces and smooth elsewhere and is at least pseudo-electrostatic. Furthermore, its boundary consists of nondegenerate static horizons. 

\subsection{Doubling} 

Like the authors of~\cite{carlahigh, cedergalelectro} and following the original models for this procedure in~\cite{bunting, ruback}, in this section we double the Riemannian manifold $\widetilde M^n$ that we constructed in the previous section and
glue the two copies along their shared boundary. The metric, the lapse, and the potential will be extended to all of $\widehat M^n$.

As the arguments mirror those of~\cite{carlahigh} and others, we will just briefly sketch them and show that they carry over to our situation with only slight modifications.  

First, we rename $\left(\widetilde M^n, \widetilde g, \widetilde N,\widetilde \Psi\right)$ to $\left(\widetilde M^n_+, \widetilde g_+, \widetilde N_+,\widetilde \Psi_+\right)$, reflect $\widetilde M^n$ as well as 
$\widetilde g, \widetilde N$ and $\widetilde \Psi$ through the boundary $\partial \widetilde M^n$ to obtain a new system that we call $\left(\widetilde M^n_-, \widetilde g_-, \widetilde N_-,\widetilde \Psi_-\right)$. 
Then we glue $\widetilde M^n_+$ and $\widetilde M^n_-$ along their shared boundary and name the resulting manifold $\widehat M^n$. We also set 

\[
 \begin{aligned}
  \widehat g& \defeq \begin{cases} \widetilde g_+ & \text{ on } \widetilde M^n_+,\\
                    \widetilde g_-\phantom{bll} &\text{ on }\widetilde M^n_-,
                     \end{cases}\\
\widehat N & \defeq \begin{cases} \widetilde N_+ & \text{ on }\widetilde M^n_+,
                   \\
                  - \widetilde N_- & \text{ on }\widetilde M^n_-,
                     \end{cases}\\      
 \widehat \Psi & \defeq \begin{cases} \widetilde \Psi_+ & \text{ on }\widetilde M^n_+,
                   \\
                  \widetilde \Psi_- \phantom{bl}& \text{ on }\widetilde M^n_-.
                     \end{cases}
 \end{aligned}
 \]

We will denote the connected components of the gluing surface $\partial \widetilde M^n\subseteq \widehat M^n$ by $\widehat \Sigma_i^{n-1}$, that is, 
\[
\partial \widetilde M^n=\bigcup\limits_{i=1}^l\widehat \Sigma_i^{n-1}.
\]

Each boundary component $\widehat \Sigma_i^{n-1}$ ($1\leq i\leq L$) is a nondegenerate static horizon, either by the construction in the previous subsection (for $1\leq i\leq l$), or by the assumptions of Theorem~\ref{meta} (for $l+1\leq i\leq L$). We fix an $1\leq i\leq L$ to show $C^{1,1}$-regularity of $\left(\widehat M^n, \widehat g, \widehat N ,\widehat \Psi\right)$ across $\partial \widetilde M^n$.

By its construction as an odd function, $\widehat N$ is smooth across $\partial \widetilde M^n$. This allows us to use $\widehat N$ as a smooth collar function across $\partial \widetilde M^n$, showing that $\widehat M^n$ is a smooth manifold. 

The fact that $d\Psi\restriction_{\widehat \Sigma_i^{n-1}}=0$ from Lemma~\ref{horizon} gives at once that $\widehat \Psi$ is $C^{1,1}$ across $\partial \widetilde M^n$. 

We imitate closely the argumentation of~\cite{carlahigh} to show that the metric is of regularity $C^{1,1}$ across the gluing surfaces. To this end, we switch to adapted coordinates $(\widehat N, y^A)$ in a neighborhood of $\widehat \Sigma_i^{n-1}$. It is immediate that
\begin{align*}
\partial_{\widehat N}\left(g_{A\widehat N}\right)&=0
\end{align*}

for all $A,B=2, \dots, n$. 

Denoting by $\widehat \nu_+$ the unit normal to $\partial \widetilde M^n$ pointing into $\widetilde M^n_+$, the level set flow equations give 
\[
\partial_{\widehat N}\left(g_{\widehat N \widehat N}\right)=-2\left(\widehat \nu_+(\widehat N)\right)^2\widehat \nabla^2 \widehat N(\widehat \nu_+, \widehat\nu_+)
\] 
on $\partial\widetilde  M^n$. 

By Formula~\eqref{decompolaplacian}, this reduces to 
\[
\partial_{\widehat N}\left(g_{\widehat N \widehat N}\right)= \Delta \widehat N, 
\]
where we also made use of the facts that  $\widehat N\restriction_{\widehat\Sigma_i^{n-1}}=0$ and that $\widehat \Sigma_i^{n-1}$ has vanishing mean curvature.   

Jointly with Equation~\eqref{EEVE1} and Lemma~\ref{horizon}, this gives 
\[
 \partial_{\widehat N}\left(g_{\widehat N \widehat N}\right)=0
\]
on $\widehat \Sigma_i^{n-1}$. Since the same holds on the other side of $\Sigma_i^{n-1}$, this shows that $g_{\widehat N \widehat N}$ is $C^{1,1}$ across $\Sigma_i^{n-1}$. 

To see that $g_{AB}$ is $C^{1,1}$, one calculates that 
\begin{equation*}
\partial_{\widehat N}\left(g_{AB}\right)=\frac{2}{\nu_+(\widehat N)}h_{AB}
\end{equation*}
for all $A,B=2, \dots, n$, where $h$ is the second fundamental form of $\widehat \Sigma_i^{n-1}$, which vanishes since $\widehat \Sigma_i^{n-1}$ is a static horizon from both sides; so that
 
\begin{equation*}
\partial_{\widehat N}\left(g_{AB}\right)=0
\end{equation*}
for all $A,B=2$. 

Summing up the results of the last two sections, we have constructed a system $\left(\widehat M^n, \widehat g, \widehat N ,\widehat \Psi\right)$ with an ``upper half'' $\widetilde M^n_+$ and a ``lower half'' $\widetilde M^n_-$, 
which is smooth except possibly on a finite collection of hypersurfaces where it is $C^{1,1}$, and such that $(M^n, g)$ 
embeds isometrically into the upper half of $\left(\widehat M^n, \widehat g\right)$, and $\widehat N\restriction_{M^n}=N ,\widehat \Psi\restriction_{M^n}=\Psi$. 

Furthermore, $\widehat g$, $\pm \widehat N$, and $\widehat \Psi$ fulfill Equations~\eqref{EEVE1}--\eqref{EEVE2} and the inequality~\eqref{EEVEineq} on $\widetilde M^n_\pm$ 
(except possibly on the gluing surfaces, where second derivatives might not exist), and $\left(\widetilde M^n_\pm, \widehat g,\pm \widehat N ,\widehat \Psi\right)$ are asymptotically Reissner--Nordstr\"om of mass $m$ and charge $q$.

\subsection{Conformal transformation and applying the positive mass theorem}

In this step, the Riemannian manifold $\left(\widehat M^n, \widehat g\right)$ that was constructed in the previous step will 
turn out to be conformally equivalent to $\left(\mathbb R^n, \delta\right)$ by a conformal factor that is constructed from the functions $\widehat N$ and $\widehat \Psi$. 

We define
\[
\Omega\defeq  \left( \frac {(1+ \widehat N)^2-\widehat C^2\widehat\Psi^2} 4 \right)^{1/(n-2)}.
\]

Note that $\Omega$ is smooth everywhere on $\widehat M^n$, except possibly on a finite collection of hypersurfaces, where it is $C^{1,1}$. 

We need to show that $\Omega$ is positive everywhere.

Defining 
 \[F_\pm\defeq  \widehat{N}-1\pm\widehat C \widehat \Psi\]

on $\widehat M^n$, we know from Lemma~\ref{mandq} that $F_\pm<0$ on the original manifold $  M^n$. Hence, 

\[0<F_+ F_-=(1- N)^2-\widehat C^2 \Psi^2<(1+ N)^2-\widehat C^2 \Psi^2=4 \Omega^{n-2} \text{ on }  M^n. \]
 
On the mirrored image of $M^n$ in the lower half $\widetilde M_-^n$, we can now conclude that 
\[0<(1+ \widehat N)^2-\widehat C^2\widehat\Psi^2 =4 \Omega^{n-2}, \]

using that in this region $\widehat N=-N$ and that we just showed that 
$(1- N)^2-\widehat C^2 \Psi^2>0$.

On the glued-in necks, we can apply a similar trick; since we already know that $F_\pm<0$ on $\Sigma_i^{n-1}$, it suffices to check 
(using the explicit form of $F_\pm$ on the necks) that $F_\pm<0$ on $\widehat \Sigma_i^{n-1}$ and again apply a maximum principle to $F_\pm$ (recalling the PDE for $F_\pm$ given in Equation~\eqref{pde}). Summing up, $\Omega>0$ on all of $\widehat M^n$.

We immediately see that the assumptions of Propositions~\ref{zero} and~\ref{onepoint} are met and may thus conclude that by Proposition~\ref{zero}, the upper half $\widetilde M^n$ with the metric $\Omega^2 \widehat g$ is asymptotically Reissner--Nordstr\"om with mass $0$ and charge $0$. 

To the lower half $\widetilde M^n_-\subseteq \widehat M^n$, we apply Proposition~\ref{onepoint} to insert a point $p_\infty$ into  $(\widehat M^n, \Omega^2 \widehat g)$ such that the resulting manifold $(\widehat M^n_\infty, g_\infty)$ has $C^{1,1}$-regularity at $p_\infty$. 

The scalar curvature of a traced-electrostatic system 
after the conformal transformation we performed 
was calculated in~\cite{bubble} (using Equations~\eqref{EEVE1},~\eqref{EEVE2}, and~\eqref{EEVEtrace}) as
\begin{equation}\label{scalarnonneg}
 \frac{1}{8\widehat N^2\Omega^{2(n-3)}}\left|2\widehat N\widehat \Psi\nabla\widehat  N-\left(\widehat N^2-1+\widehat C^2\widehat \Psi^2\right)\nabla\widehat \Psi\right|^2\geq 0.
\end{equation}

It is easy to check (using the standard formula for the conformally transformed scalar curvature and the just mentioned calculation in~\cite{bubble}) that for the pseudo-electrostatic system $(\widehat M^n, \Omega^2 \widehat g)$ the transformed scalar curvature is bounded from below by the left-hand side expression in~\eqref{scalarnonneg} and therefore also nonnegative. 

To sum up, we have constructed a geodesically complete manifold $(\widehat M^n_\infty, g_\infty)$  with nonnegative scalar curvature and vanishing mass. 
The positive mass theorem for smooth manifolds of arbitrary dimensions was proven in~\cite{SYn}. 
A Ricci flow argument (which is independent of the dimension) in~\cite{MSz} shows that if the rigidity case of the positive mass theorem holds in a certain class of smooth manifolds, 
then it also holds for lower regularity ``manifolds with corners'' in that same class (and the isomorphism to the Euclidean space is smooth wherever the metric is smooth).
In particular, the authors of ~\cite{MSz} cover the case that the metric is only $C^{1,1}$-regular on a finite collection of hypersurfaces. 
Summing up, we have the rigidity statement of the positive mass theorem in arbitrary dimensions for smooth manifolds with $C^{1,1}$-regular hypersurfaces at our disposition. 
We thereby get that $(\widehat M^n_\infty, g_\infty)$ is isometric to the Euclidean space $\left(\mathbb R^n, \delta\right)$, 
and the isometry is smooth except possibly on the lower regularity submanifolds (see also~\cite{carlahigh} for the application of the positive mass theorem to an analogous situation).

\subsection{Recovering the Reissner--Nordstr\"om manifold}

In a last step, we show that the original Riemannian manifold $(M^n, g)$ must have been a piece of the $n$-dimensional spatial Reissner--Nordstr\"om manifold. 
We will not explicitly denote the isometry $(\widehat M^n_\infty, g_\infty) \approx \left(\mathbb R^n, \delta\right)$ in what follows. 

Just like the author of \cite{carlahigh}, we first recall that each boundary component $ \Sigma_i^{n-1}$ is a closed, totally umbilical hypersurface of $(M^n, g)$; 
and hence (umbilicity being invariant under the conformal transformation $g\mapsto \Omega^2 g=\delta$ ) a closed, totally umbilical hypersurface of the Euclidean space. 
As a consequence, each $\Sigma_i^{n-1}$ is a round sphere in $(\mathbb R^n, \delta)$, and thus---the conformal
factor being constant on each boundary component---each $\left(\Sigma_i^{n-1}, g\restriction_{\Sigma_i^{n-1}} \right)$ is an intrinsically round sphere.

Second (and again as in~\cite{carlahigh}), since $\widehat M^n_\infty$ is homeomorphic to $\mathbb R^n$, the doubled manifold $\widehat M^n$ (without the inserted point) must be homeomorphic to $\mathbb R^n\setminus\{0\}$. 
In particular, the $(n-1)$-th fundamental group of $\widehat M^n$ is 
\begin{equation*}
\pi_{n-1}\left(\widehat M^n\right)= \pi_{n-1}\left(\mathbb R^n\setminus\{0\}\right)=\mathbb Z.
\end{equation*}

Recalling that each boundary component $\Sigma_i^{n-1}$ of $M^n$ is a topological sphere, we may now conclude that $\partial M^n$ has only one component. This allows us to drop from now on the index $i$. 

In the last step, we will determine the conformal factor $\Omega^2$, applying a maximum principle. 
Since we know that the conformally transformed manifold $(\widehat M^n_\infty, g_\infty)$ is flat and in particular has vanishing scalar curvature, Inequality~\eqref{scalarnonneg} (for the conformally transformed scalar curvature) reduces to the equality
\begin{equation}\label{temp1}
 2\widehat N\widehat \Psi\nabla\widehat  N=\left(\widehat N^2-1+\widehat C^2\widehat \Psi^2\right)\nabla\widehat \Psi.
\end{equation}

It was computed in \cite{bubble} as a consequence of Equations~\eqref{temp1},~\eqref{EEVE1}, and~\eqref{EEVE2} that the functions 
\begin{align} \label{vpm1}
v_\pm\defeq(1+\widetilde N\pm \widehat C\widetilde\Psi)^{-1}: \widetilde M^n_+\rightarrow\mathbb R
\end{align}
are harmonic with respect to the conformally changed metric $\Omega^2g=\delta$ on $\widetilde M^n_+\subseteq \mathbb R^n$.

The asymptotic conditions from Definition~\ref{asyrnfct} imply that $v_\pm (y)\rightarrow\frac 1 2 $ as $|y|\rightarrow \infty$. 

By the same arguments as above for $\Sigma^{n-1}$, the horizon $\widehat \Sigma^{n-1}$ is a round sphere $\mathbb S_{T}(a)\subseteq \mathbb R^n$ with radius $T$ and center $a\in \mathbb R^n$, and 
$v_\pm\restriction_{\widehat \Sigma^{n-1}}=v_\pm\restriction_{\mathbb S_{T}(a)}$ are constants. We can exclude $v_\pm\restriction_{\mathbb S_{T}(a)}\in\{0, \pm\infty\}$ because 
$v_+v_-=\frac 1 4 \Omega^{-(n-2)}$ on all of $\widetilde M^n$ and $\Omega^{-(n-2)}$ vanishes nowhere. 
Thus, by a maximum principle for elliptic PDEs, the functions $v_\pm$ are uniquely determined as the guessed solutions that are known from Reissner--Nordstr\"om, namely 
\begin{equation}\label{vpm2}
 v_\pm (y)=\left(1+ \left(1-\frac{T^{2(n-2)}+c_\pm^2}{T^{n-2}|y-a|^{n-2}}+\frac{   c_\pm^2}{|y-a|^{2(n-2)}}\right)^{\frac 1 2 }
\pm\frac{   c_\pm}{|y-a|^{n-2}}\right)^{-1}, 
\end{equation}

where $c_\pm$ are constants that are determined by the constants $v_\pm\restriction_{\mathbb S_{T}(a)}$ via $v_\pm\restriction_{\mathbb S_{T}(a)}=\left(1\pm \frac{c_\pm}{T^{n-2}}\right)^{-1}$. 

Now, adding the two equations
\begin{equation*}
1\pm \frac{c_\pm}{T^{n-2}}=\frac{1}{v_\pm\restriction_{\mathbb S_T(a)}}
=1\pm \widehat C\Psi\restriction_{\mathbb S_T(a)}, 
\end{equation*}

leads to $c_+=c_-\eqdef c$.

To determine the constant $c$, we compare the asymptotic behavior of $v_\pm$ that we get from Equations~\eqref{vpm1} and~\eqref{vpm2} and conclude that
\begin{equation*}
 m\pm q=\frac{T^{n-2}}{2}+\frac{c^2}{2T^{n-2}} \pm  c. 
\end{equation*}

This is equivalent to 
\begin{align*}
2m&=T^{n-2}+\frac{c^2}{T^{n-2}},\\
q&=c, 
\end{align*}

so that 
\[
v_\pm=\left(1-N_{m,q}\pm \widehat C\Psi_q \right)^{-1}.
\]
We have now determined $v_\pm$ and hence also $N$, $\Psi$, and the conformal factor $\Omega^2$ as the respective functions known from the Reissner--Nordstr\"om manifold with mass $m$ and charge $q$, and hence we know that $\left(M^n, g\right)$ is exactly the Reissner--Nordstr\"om manifold of mass $m$ and charge $q$. 
The inequality $m>|q|$ was already proven in Lemma~\ref{mandq}, so that now the proof of Theorem~\ref{meta} and thereby also of Theorem~\ref{main} is complete. 

\section{Acknowledgements}

I would like to thank Carla Cederbaum for suggesting this problem and for helpful discussions. 

This work is supported by the Institutional Strategy of the University of T\"ubingen (Deutsche Forschungsgemeinschaft, ZUK 63).
\bibliography{uniqueness}

\bibliographystyle{plain}
\end{document}